\DeclareMathOperator{\tr}{\text{tr}}
\newcommand{\real}{\text{\rm I\!R}}
\def\bx{\boldsymbol{x}}
\def\by{\boldsymbol{y}}
\def\bw{\boldsymbol{w}}
\def\bu{\boldsymbol{u}}
\def\bv{\boldsymbol{v}}
\def\bz{\boldsymbol{z}}
\def\bX{\boldsymbol{X}}
\def\bY{\boldsymbol{Y}}
\def\bW{\boldsymbol{W}}
\def\balpha{\boldsymbol{\alpha}}
\def\bbeta{\boldsymbol{\beta}}
\def\bmu{\boldsymbol{\mu}}
\def\bpi{\boldsymbol{\pi}}
\def\bSigma{\boldsymbol{\Sigma}}
\def\bvartheta{\boldsymbol{\vartheta}}
\newtheorem{pro}{Proposition} 
\renewenvironment{proof}{\textit{Proof.}}{\qed}
\journal{arXiv.org}
\begin{document}

\begin{frontmatter}

\title{
Robust Clustering in Regression Analysis via the\\ Contaminated Gaussian Cluster-Weighted Model
}

\author[CT]{Antonio Punzo\corref{cor}} 
\ead{antonio.punzo@unict.it}
\author[CA]{Paul D.~McNicholas} 
\ead{mcnicholas@math.mcmaster.ca}
\cortext[cor]{Corresponding author: 
Email: \texttt{antonio.punzo@unict.it}, 
Phone: +39-095-7537640, 
Fax: +39-095-7537610}
\address[CT]{
Department of Economics and Business, University of Catania, Catania, Italy.
							}
\address[CA]{
Department of Mathematics \& Statistics, McMaster University, Hamilton, Canada.
}

\begin{abstract}
\small
The Gaussian cluster-weighted model (CWM) is a mixture of regression models with random covariates that allows for flexible clustering of a random vector composed of response variables and covariates.
In each mixture component, it adopts a Gaussian distribution for both the covariates and the responses given the covariates.
To robustify the approach with respect to possible elliptical heavy tailed departures from normality, due to the presence of atypical observations, the contaminated Gaussian CWM is here introduced.
In addition to the parameters of the Gaussian CWM, each mixture component of our contaminated CWM has a parameter controlling the proportion of outliers, one controlling the proportion of leverage points, one specifying the degree of contamination with respect to the response variables, and one specifying the degree of contamination with respect to the covariates. 
Crucially, these parameters do not have to be specified \textit{a~priori}, adding flexibility to our approach.
Furthermore, once the model is estimated and the observations are assigned to the groups, a finer intra-group classification in typical points, outliers, good leverage points, and bad leverage points --- concepts of primary importance in robust regression analysis --- can be directly obtained. 
Relations with other mixture-based contaminated models are analyzed, identifiability conditions are provided, an expectation-conditional maximization algorithm is outlined for parameter estimation, and various implementation and operational issues are discussed.  
Properties of the estimators of the regression coefficients are evaluated through Monte Carlo experiments and compared to the estimators from the Gaussian CWM. 
A sensitivity study is also conducted based on a real data set.
\end{abstract}

\begin{keyword}
Mixture models \sep Cluster-weighted models \sep Model-based clustering \sep Contaminated Gaussian distribution \sep Robust regression
\end{keyword}
\end{frontmatter}


\section{Introduction}
\label{sec:Introduction}

Given a continuous $d_{\bW}$-variate random variable $\bW$, with density $p\left(\bw\right)$, finite mixtures of (continuous) distributions constitute both a flexible way for density estimation and a powerful device for clustering and classification by often assuming that each mixture component represents a group (or cluster or class) in the original data (see, e.g., \citealp{Titt:Smit:Mako:stat:1985}, \citealp{McLa:Basf:mixt:1988}, and \citealp{McLa:Peel:fini:2000}). 


In many applied problems, the variable of interest $\bW$ is composed by a $d_{\bY}$-variate response variable $\bY$ and by a random covariate $\bX$ of dimension $d_{\bX}$, with $d_{\bX}+d_{\bY}=d_{\bW}$; that is, $\bW=\left(\bX,\bY\right)$.
In such a case mixtures of distributions, that fail to incorporate a possible local (i.e., within-group) relation of $\bY$ on $\bX=\bx$, may perform inadequately.
A valid alternative, in the mixture modeling framework, is represented by mixtures of regression models (see \citealp{DeSa:Cron:Amax:1988} and \citealp[][Chapter~8]{Fruh:Fine:2006} for details).
In turn, this family of models can be split into two sub-families: mixtures of regression models with fixed covariates and mixtures of regression models with random covariates.
However, as stated by \citet{Henn:Iden:2000}, the former subfamily is inadequate for most of the applications because it assumes ``assignment independence'', i.e., that the probability for a point $\left(\bx,\by\right)$ to be generated by one of the groups distributions has to be the same for all covariate values~$\bx$. In other words, the assignment of the data points to the groups has to be independent of the covariates.
On the contrary, mixtures of regression models with random covariates --- which this paper focuses on --- assume ``assignment dependence'' by allowing the assignment of the data points to the groups to depend on $\bX$.

An eminent member in the class of mixtures of regression models with random covariates is represented by the cluster-weighted model \citep[CWM;][]{Gers:Nonl:1997}, also called the saturated mixture regression model \citep{Wede:Conc:2002}.
The CWM factorizes $p\left(\bx,\by\right)$, in each mixture component, into the product between the conditional density of $\bY|\bx$ and the marginal density of $\boldsymbol{X}$ by assuming a local regression of $\bY$ on $\bx$. 
The distribution of $\bX$ can differ across groups and this allows for assignment dependence.
Recent work on cluster-weighted modeling can be found in \citet{Ingr:Mino:Vitt:Loca:2012}, \citet{Sube:Punz:Ingr:McNi:Clus:2013}, \citet{Ingr:Mino:Punz:Mode:2014}, and \citet{Punz:Flex:2014}.
With respect to our framework, characterized by a possible multivariate response variable, \citet{Dang:Punz:McNi:Ingr:Brow:Mult:2014} propose the Gaussian CWM with density
\begin{equation}
p\left(\bx,\by;\bvartheta\right)=\sum_{j=1}^k\pi_j\phi\left(\by;\bmu_{\bY}\left(\bx;\bbeta_j\right),\bSigma_{\bY|j}\right)\phi\left(\bx;\bmu_{\bX|j},\bSigma_{\bX|j}\right),
\label{eq:Gaussian CWM}
\end{equation}
where $\pi_j$ are positive weights summing to one, $\phi\left(\cdot;\bmu,\bSigma\right)$ denotes the density of a Gaussian random vector with mean $\bmu$ and covariance matrix $\bSigma$, and $\bmu_{\bY}\left(\bx;\bbeta_j\right)=E\left(\bY|\bx,j\right)=\bbeta_j'\bx^*$ denotes the local conditional mean of $\bY|\bx$, with $\bbeta_j$ being a vector of regression coefficients of dimension $\left[\left(1+d_{\bX}\right) \times d_{\bY}\right]$ and $\bx^*=\left(1,\bx\right)$ to account for the intercept(s).
In \eqref{eq:Gaussian CWM}, $\bvartheta$ contains all of the parameters of the model.

Unfortunately, real data are often ``contaminated'' by atypical observations that affect the estimation of the model parameters with particular interest, in the regression context, to the regression coefficients.
Accordingly, the detection of these atypical observations, and the development of robust methods of parameters estimation insensitive to their presence, is an important problem.
However, as emphasized by \cite{Davi:Gath:Thei:1993} --- see also \citealp{Henn:Fixe:2002} --- atypical observations should be defined with respect to a reference distribution.
That is, the shape (i.e., distribution) of the typical points has to be assumed in order to define what an atypical point is; in this way, the region of atypical points can be defined, e.g., as a region where the density of the reference distribution is low. 
If the reference distribution is chosen to be Gaussian, as for example in model~\eqref{eq:Gaussian CWM}, a common and simple elliptical generalization, having heavier tails for the occurrence of atypical points, is the contaminated Gaussian distribution; it is a two-component Gaussian mixture in which one of the components, with a large prior probability, represents the typical observations (reference distribution), and the other, with a small prior probability, the same mean, and an inflated covariance matrix, represents the atypical observations \citep{Aitk:Wils:Mixt:1980}.

Based on the above considerations, this paper introduces the contaminated Gaussian CWM, obtained from \eqref{eq:Gaussian CWM} by substituting the Gaussian distribution with the contaminated Gaussian distribution. 
Interestingly, each component joint density of the proposed model adheres to the taxonomy of atypical observations which is commonly considered in regression analysis; such a taxonomy will be recalled here \citep[for further details see, e.g.,][Chapter~1]{Rous:Lero:Robu:2005}.
In regression analysis, atypical observations can be distinguished between two types. 
Atypical observations in $\bY|\bx$ represent model failure. 
Such observations are called (vertical) outliers.
Atypical observations with respect to $\bX$ are called leverage points. 
In regression it helps to make a distinction between two types of leverage points: good and bad. 
A bad leverage point is a regression outlier that has an $\bx$ value that is atypical among the values of $\bX$ as well. 
A good leverage point is a point that is unusually large or small among the $\bX$ values but is not a regression outlier ($\bx$ is atypical but the corresponding $\by$ fits the model quite well).
A point like this is called ``good'' because it improves the precision of the regression coefficients \citep[][p.~635]{Rous:VanZ:Unma:1990}.
Each point $\left(\bx,\by\right)$ can be so labeled in one of the four categories indicated in \tablename~\ref{tab:Atypical observation labelling}.
\begin{table}[!ht]
\caption{
\label{tab:Atypical observation labelling}
Atypical observation labelling.
}
\centering
\begin{tabular}{ccc}
\toprule
\backslashbox{Outlier (on $\bY|\bx$)}{Leverage (on $\bX$)} 	 &	Yes  &	No	 \\
\midrule															
Yes   & bad leverage  &     outlier      \\
No    & good leverage & typical (bulk of the data)  \\
\bottomrule	
\end{tabular}
\end{table} 

As it will be better explained in Section~\ref{subsec:Automatic detection of noise}, once the contaminated Gaussian CWM is fitted to the observed data, by means of maximum \textit{a~posteriori} probabilities, each observation can be first assigned to one of the $k$ groups and then classified into one of the four categories defined in \tablename~\ref{tab:Atypical observation labelling}; thus, we have a model for simultaneous clustering and detection of atypical observations in a regression context.

In the mixtures of regression models framework, other solutions for robust clustering exist.
Some recent proposals are given in the following:
\begin{enumerate}
	\item \citet{Gali:Soff:Amul:2014} propose a mixture of parallel regression models with $t$-distributed errors;
	\item \citet{Yao:Wei:Yu:Robu:2014} introduce mixtures of regression models with $t$-distributed errors;
	\item \citet{Song:Yao:Xing:Robu:2014} define mixtures of regression models with Laplace-distributed errors;
	\item \citet{Ingr:Mino:Vitt:Loca:2012} propose the $t$ CWM, where the Gaussian distribution in \eqref{eq:Gaussian CWM} is replaced by a $t$ distribution \citep[see also][]{Ingr:Mino:Punz:Mode:2014}.
\end{enumerate}
In general, with respect to our approach, these four models have some drawbacks.
First, they do not allow for the direct detection of atypical observations.
Actually, for the $t$-based models, a procedure described by \citet[][p.~232]{McLa:Peel:fini:2000} could be eventually adopted to classify the observations as atypical.
The procedure stems from a $\chi^2$-approximation of the squared Mahalanobis distance of each observation after its maximum \textit{a~posteriori} classification to one of the $k$ groups.
However, the procedure is not direct and it is not corroborated by the theory. 
Second, the first three models do not consider the presence of possible leverage points in each group; moreover, they belong to the class of mixture of regression models with fixed covariates and, as such, assume assignment independence.
The first model is also based on the assumption of parallel local regression models.
It is also noteworthy that only the first model considers a possible multivariate response variable $\bY$.
For further mixture-based approaches for robust clustering in regression analysis, see, e.g., \citet{Neyk:Filz:Dimo:Neyt:Robu:2007} and \citet{Bai:Yao:Boyer:Robu:2012}.   

The paper is organized as follows.
The contaminated Gaussian CWM is presented in Section~\ref{sec:The model} and compared to other mixture-based contaminated approaches in Section~\ref{sec:Relation with other contaminated models}.
Sufficient conditions for identifiability are given in Section~\ref{sec:Identifiability}, and an expectation-conditional maximization (ECM) algorithm for maximum likelihood parameter estimation is outlined in Section~\ref{sec:Maximum likelihood estimation}. 
Further operational aspects are discussed in Section~\ref{sec:fas}.
In Section~\ref{subsec:Numerical evaluation of some properties of the model estimators}, properties of the estimators of the regression coefficients $\bbeta_j$ are evaluated through Monte Carlo experiments and compared to the estimators from the Gaussian CWM; a sensitivity study is also conducted in Section~\ref{subsec:Sensitivity study based on real data} based on a real data set.
The paper concludes with some discussion in Section~\ref{sec:Discussion and future work}.


\section{The model}
\label{sec:The model}

A contaminated Gaussian distribution, for a real-valued random vector $\bW$, is given by
\begin{equation}
f\left(\bw;\bmu_{\bW},\bSigma_{\bW},\alpha_{\bW},\eta_{\bW}\right)=\alpha_{\bW}\phi\left(\bw;\bmu_{\bW},\bSigma_{\bW}\right)+\left(1-\alpha_{\bW}\right)\phi\left(\bw;\bmu_{\bW},\eta_{\bW}\bSigma_{\bW}\right),
\label{eq:contaminated Gaussian distribution}
\end{equation}
where $\alpha_{\bW}\in\left(0,1\right)$ and $\eta_{\bW}>1$.
In \eqref{eq:contaminated Gaussian distribution}, $\eta_{\bW}$ denotes the degree of contamination, and because of the assumption $\eta_{\bW}>1$, it can be interpreted as the increase in variability due to the bad observations (i.e., it is an inflation parameter).
As a limiting case, when $\alpha_{\bW}\rightarrow 1^-$ and $\eta_{\bW}\rightarrow 1^+$, the Gaussian distribution is obtained.

A common and different way to burden the Gaussian tails (reference distribution), still maintaining ellipticity, is represented by the $t$ distribution (see \citealp{Kotz:Nada:Mult:2004} and \citealp{Lang:Litt:Tayl:Robu:1989} for details). 
An advantage of model~\eqref{eq:contaminated Gaussian distribution} with respect to the $t$ model is that, once the parameters are estimated, say $\hat{\bmu}_{\bW}$, $\hat{\bSigma}_{\bW}$, $\hat{\alpha}_{\bW}$, and $\hat{\eta}_{\bW}$, we can establish if a generic observation $\bw$ is either good or bad, with respect to the reference distribution, by means of the \textit{a~posteriori} probability
\begin{equation*}
P\left(\text{$\bw$ is good}\left|\hat{\bmu}_{\bW},\hat{\bSigma}_{\bW},\hat{\alpha}_{\bW},\hat{\eta}_{\bW}\right.\right)=\hat{\alpha}\phi\left(\bw;\hat{\bmu},\hat{\bSigma}\right)\Big/f\left(\bw;\hat{\bmu}_{\bW},\hat{\bSigma}_{\bW},\hat{\alpha}_{\bW},\hat{\eta}_{\bW}\right),
\end{equation*}
and $\bw$ will be considered good if $P\left(\text{$\bw$ is good}\left|\hat{\bmu}_{\bW},\hat{\bSigma}_{\bW},\hat{\alpha}_{\bW},\hat{\eta}_{\bW}\right.\right)\geq 1/2$, while it will be considered bad otherwise.

Based on model~\eqref{eq:contaminated Gaussian distribution}, \citet{Punz:McNi:Robu:2013} introduce, for robust model-based clustering, finite mixtures of contaminated Gaussian distributions with density
\begin{equation}
p\left(\bw;\bvartheta\right)=\sum_{j=1}^k\pi_jf\left(\bw;\bmu_{\bW|j},\bSigma_{\bW|j},\alpha_{\bW|j},\eta_{\bW|j}\right).
\label{eq:mixtures of contaminated Gaussian distributions}
\end{equation}
Unfortunately, with respect to the framework of this paper, model~\eqref{eq:mixtures of contaminated Gaussian distributions} does not account for local relations of the response $\bY$ on the covariate $\bX=\bx$ when $\bW=\left(\bX,\bY\right)$.
However, in a context of mixtures of regression models with fixed  covariates, the contaminated Gaussian distribution can be also considered to model $\bY|\bx$ in each mixture component; this leads to the mixture of contaminated Gaussian regression models
\begin{equation}
p\left(\by|\bx;\bvartheta\right)=\sum_{j=1}^k\pi_jf\left(\by;\bmu_{\bY}\left(\bx;\bbeta_j\right),\bSigma_{\bY|j},\alpha_{\bY|j},\eta_{\bY|j}\right).
\label{eq:mixtures of contaminated Gaussian regressions}
\end{equation}
However, because model~\eqref{eq:mixtures of contaminated Gaussian regressions} belongs to the class of mixtures of regression models with fixed covariates, it suffers from the assignment independence property.
Moreover, it can not be used to detect local leverage points (cf. Section~\ref{sec:Introduction}). 

To improve model~\eqref{eq:mixtures of contaminated Gaussian regressions}, we propose the contaminated Gaussian CWM; it is obtained by replacing the Gaussian distribution in model~\eqref{eq:Gaussian CWM} with the contaminated Gaussian distribution.
This yields
\begin{equation}
p\left(\bx,\by;\bvartheta\right)=\sum_{j=1}^k\pi_j
f\left(\by;\bmu_{\bY}\left(\bx;\bbeta_j\right),\bSigma_{\bY|j},\alpha_{\bY|j},\eta_{\bY|j}\right)
f\left(\bx;\bmu_{\bX|j},\bSigma_{\bX|j},\alpha_{\bX|j},\eta_{\bX|j}\right).
\label{eq:contaminated Gaussian CWM}
\end{equation}

\section{Relation with other contaminated models}
\label{sec:Relation with other contaminated models}

The contaminated Gaussian CWM defined in \eqref{eq:contaminated Gaussian CWM} can be related to the mixture-based contaminated models defined in Section~\ref{sec:The model}.

\subsection{Comparison with the mixture of contaminated Gaussian distributions}

To begin, we consider the comparison with mixtures of contaminated Gaussian distributions.
With this aim, it is convenient to write the parameters $\bmu_{\bW|j}$ and $\bSigma_{\bW|j}$, $j=1,\ldots,k$, of the mixture of contaminated Gaussian distributions defined in \eqref{eq:mixtures of contaminated Gaussian distributions} as
\begin{equation*}
\bmu_{\bW|j}=\begin{pmatrix}
\bmu_{\bX|j}\\
\bmu_{\bY|j}
\end{pmatrix}
\quad
\text{and}
\quad
\bSigma_{\bW|j}=\begin{pmatrix}
\bSigma_{\bX\bX|j} & \bSigma_{\bX\bY|j} \\
\bSigma_{\bY\bX|j} & \bSigma_{\bY\bY|j}
\end{pmatrix}.
\end{equation*}  
Based on well-known results about marginal and conditional distributions from a multivariate Gaussian random vector \citep[see, e.g.,][]{Mard:Kent:Bibb:Mult:1997}, model~\eqref{eq:mixtures of contaminated Gaussian distributions} can be rewritten as
\begin{align}
p\left(\bw;\bvartheta\right)=\sum_{j=1}^k\pi_j & f\left(\bw;\bmu_{\bW|j},\bSigma_{\bW|j},\alpha_{\bW|j},\eta_{\bW|j}\right)
=\sum_{j=1}^k\pi_j \left[\alpha_{\bW|j}\phi\left(\bw;\bmu_{\bW|j},\bSigma_{\bW|j}\right)+\left(1-\alpha_{\bW|j}\right)\phi\left(\bw;\bmu_{\bW|j},\eta_{\bW|j}\bSigma_{\bW|j}\right)\right]\nonumber\\
=\sum_{j=1}^k\pi_j & \Bigg[\frac{\alpha_{\bW|j}\phi\left(\bx;\bmu_{\bX|j},\bSigma_{\bX\bX|j}\right)}{\alpha_{\bW|j}\phi\left(\bx;\bmu_{\bX|j},\bSigma_{\bX\bX|j}\right)+\left(1-\alpha_{\bW|j}\right)\phi\left(\bx;\bmu_{\bX|j},\eta_{\bW|j}\bSigma_{\bX\bX|j}\right)}\phi\left(\by;\bmu_{\bY|\bX,j},\bSigma_{\bY\bY|j}\right)\nonumber\\
&+\frac{\left(1-\alpha_{\bW|j}\right)\phi\left(\bx;\bmu_{\bX|j},\eta_{\bW|j}\bSigma_{\bX\bX|j}\right)}{\alpha_{\bW|j}\phi\left(\bx;\bmu_{\bX|j},\bSigma_{\bX\bX|j}\right)+\left(1-\alpha_{\bW|j}\right)\phi\left(\bx;\bmu_{\bX|j},\eta_{\bW|j}\bSigma_{\bX\bX|j}\right)}\phi\left(\by;\bmu_{\bY|\bX,j},\eta_{\bW|j}\bSigma_{\bY\bY|j}\right)\Bigg]\nonumber\\
&\times\left[\alpha_{\bW|j}\phi\left(\bx;\bmu_{\bX|j},\bSigma_{\bX\bX|j}\right)+\left(1-\alpha_{\bW|j}\right)\phi\left(\bx;\bmu_{\bX|j},\eta_{\bW|j}\bSigma_{\bX\bX|j}\right)\right]\nonumber\\
=\sum_{j=1}^k\pi_j & \Big[\alpha_{\bW|j}\phi\left(\bx;\bmu_{\bX|j},\bSigma_{\bX\bX|j}\right)\phi\left(\by;\bmu_{\bY|\bX,j},\bSigma_{\bY\bY|j}\right)\nonumber\\
&+\left(1-\alpha_{\bW|j}\right)\phi\left(\bx;\bmu_{\bX|j},\eta_{\bW|j}\bSigma_{\bX\bX|j}\right)\phi\left(\by;\bmu_{\bY|\bX,j},\eta_{\bW|j}\bSigma_{\bY\bY|j}\right)\Big], 
\label{eq:rewrite mixtures of CNs}
\end{align}
where
\begin{equation*}
\bmu_{\bY|\bX,j}=\bmu_{\bY|j}+\bSigma_{\bY\bX|j}\bSigma_{\bX\bX|j}^{-1}\left(\bx-\bmu_{\bX|j}\right)
\end{equation*}
is a linear function of $\bx$.
For comparison's sake, it is also convenient to write model~\eqref{eq:contaminated Gaussian CWM} as
\begin{align}
p\left(\bx,\by;\bvartheta\right)=\sum_{j=1}^k\pi_j&
f\left(\by;\bmu_{\bY}\left(\bx;\bbeta_j\right),\bSigma_{\bY|j},\alpha_{\bY|j},\eta_{\bY|j}\right)
f\left(\bx;\bmu_{\bX|j},\bSigma_{\bX|j},\alpha_{\bX|j},\eta_{\bX|j}\right)\nonumber\\
=\sum_{j=1}^k\pi_j& \left[\alpha_{\bY|j}\phi\left(\by;\bmu_{\bY}\left(\bx;\bbeta_j\right),\bSigma_{\bY|j}\right)+\left(1-\alpha_{\bY|j}\right)\phi\left(\by;\bmu_{\bY}\left(\bx;\bbeta_j\right),\eta_{\bY|j}\bSigma_{\bY|j}\right)\right] \nonumber\\
&\times  \left[\alpha_{\bX|j}\phi\left(\bx;\bmu_{\bX|j},\bSigma_{\bX|j}\right)+\left(1-\alpha_{\bX|j}\right)\phi\left(\bx;\bmu_{\bX|j},\eta_{\bX|j}\bSigma_{\bX|j}\right)\right]\nonumber\\
=\sum_{j=1}^k\pi_j&\Big[\alpha_{\bX|j}\alpha_{\bY|j}\phi\left(\bx;\bmu_{\bX|j},\bSigma_{\bX|j}\right) \phi\left(\by;\bmu_{\bY}\left(\bx;\bbeta_j\right),\bSigma_{\bY|j}\right)\nonumber\\
&+\alpha_{\bX|j}\left(1-\alpha_{\bY|j}\right)\phi\left(\bx;\bmu_{\bX|j},\bSigma_{\bX|j}\right) \phi\left(\by;\bmu_{\bY}\left(\bx;\bbeta_j\right),\eta_{\bY|j}\bSigma_{\bY|j}\right)\nonumber\\
&+\left(1-\alpha_{\bX|j}\right)\alpha_{\bY|j}\phi\left(\bx;\bmu_{\bX|j},\eta_{\bX|j}\bSigma_{\bX|j}\right) \phi\left(\by;\bmu_{\bY}\left(\bx;\bbeta_j\right),\bSigma_{\bY|j}\right)\nonumber\\
&+\left(1-\alpha_{\bX|j}\right)\left(1-\alpha_{\bY|j}\right)\phi\left(\bx;\bmu_{\bX|j},\eta_{\bX|j}\bSigma_{\bX|j}\right) \phi\left(\by;\bmu_{\bY}\left(\bx;\bbeta_j\right),\eta_{\bY|j}\bSigma_{\bY|j}\right)\Big].
\label{eq:rewrite contaminated CWM}
\end{align}
Comparing the expressions enclosed within square brackets at the end of \eqref{eq:rewrite mixtures of CNs} with the equivalent term in \eqref{eq:rewrite contaminated CWM},
it is straightforward to realize the difference between the models.

\subsection{Comparison with the mixture of contaminated Gaussian regressions}

The second comparison concerns mixtures of contaminated Gaussian regression models as defined in \eqref{eq:mixtures of contaminated Gaussian regressions}.
The comparison is not direct because, while model~\eqref{eq:mixtures of contaminated Gaussian regressions} is defined on the conditional distribution $p\left(\by|\bx\right)$, model~\eqref{eq:contaminated Gaussian CWM} is defined on the joint distribution $p\left(\bx,\by\right)$.
Although we can not compute $p\left(\bx,\by\right)$ from a mixture of regression models with fixed covariates because this class of models does not consider modeling for the marginal distribution $p\left(\bx\right)$, we can still compute the conditional distribution $p\left(\by|\bx\right)$ from the contaminated Gaussian CWM.
In particular, by integrating out $\by$ from model~\eqref{eq:contaminated Gaussian CWM} we obtain
\begin{equation}
p\left(\bx;\bvartheta\right)=\sum_{j=1}^k\pi_j
f\left(\bx;\bmu_{\bX|j},\bSigma_{\bX|j},\alpha_{\bX|j},\eta_{\bX|j}\right);
\label{eq:marginal from a CN CWM}
\end{equation}
this is a mixture of contaminated Gaussian distributions for the $\bX$ only.
The ratio of \eqref{eq:contaminated Gaussian CWM} over \eqref{eq:marginal from a CN CWM} yields 
\begin{equation}
p\left(\by|\bx;\bvartheta\right)=
\sum_{j=1}^k
\frac{\pi_jf\left(\bx;\bmu_{\bX|j},\bSigma_{\bX|j},\alpha_{\bX|j},\eta_{\bX|j}\right)}{\displaystyle\sum_{h=1}^k\pi_hf\left(\bx;\bmu_{\bX|h},\bSigma_{\bX|h},\alpha_{\bX|h},\eta_{\bX|h}\right)}f\left(\by;\bmu_{\bY}\left(\bx;\bbeta_j\right),\bSigma_{\bY|j},\alpha_{\bY|j},\eta_{\bY|j}\right).
\label{eq:conditional from a CN CWM}
\end{equation} 
Model~\eqref{eq:conditional from a CN CWM} is the conditional distribution of $\bY|\bx$ from a contaminated Gaussian CWM; it can be seen as a mixture of regression models with (dynamic) weights depending on $\bx$.

The following proposition shows as the family of mixtures of contaminated Gaussian regression models can be seen as nested in the family of contaminated Gaussian CWMs, as defined by \eqref{eq:conditional from a CN CWM}.
\begin{pro}
\label{pro:contaminated Gaussian CWM versus mixtures of contaminated Gaussian regressions}
If, in \eqref{eq:conditional from a CN CWM}, $\bmu_{\bX|1}=\cdots=\bmu_{\bX|k}=\bmu_{\bX}$, $\bSigma_{\bX|1}=\cdots=\bSigma_{\bX|k}=\bSigma_{\bX}$, $\alpha_{\bX|1}=\cdots=\alpha_{\bX|k}=\alpha_{\bX}$, and $\eta_{\bX|1}=\cdots=\eta_{\bX|k}=\eta_{\bX}$, then mixtures of contaminated Gaussian regression models, as defined by \eqref{eq:mixtures of contaminated Gaussian regressions}, can be seen as a particular case of the contaminated Gaussian CWM, as defined by \eqref{eq:conditional from a CN CWM}.
\end{pro}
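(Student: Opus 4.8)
The plan is to substitute the stated equalities directly into the dynamic mixing weights of \eqref{eq:conditional from a CN CWM} and to verify that the covariate-dependent factor cancels, so that each weight reduces to the constant $\pi_j$ and \eqref{eq:mixtures of contaminated Gaussian regressions} is recovered.

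First I would observe that, once $\bmu_{\bX|j}=\bmu_{\bX}$, $\bSigma_{\bX|j}=\bSigma_{\bX}$, $\alpha_{\bX|j}=\alpha_{\bX}$, and $\eta_{\bX|j}=\eta_{\bX}$ for every $j=1,\ldots,k$, the covariate density $f\left(\bx;\bmu_{\bX|j},\bSigma_{\bX|j},\alpha_{\bX|j},\eta_{\bX|j}\right)$ no longer depends on the component index $j$; denote this common value by $f\left(\bx;\bmu_{\bX},\bSigma_{\bX},\alpha_{\bX},\eta_{\bX}\right)$. Because this factor is identical in every summand of the denominator of each weight appearing in \eqref{eq:conditional from a CN CWM}, it can be pulled out of that sum and cancelled against the same factor in the numerator.

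Second, after this cancellation the generic weight becomes $\pi_j\big/\sum_{h=1}^k\pi_h$. Since the $\pi_j$ are, by definition, positive and sum to one, the denominator equals one and the weight collapses to $\pi_j$. Substituting back into \eqref{eq:conditional from a CN CWM} yields precisely $\sum_{j=1}^k\pi_j f\left(\by;\bmu_{\bY}\left(\bx;\bbeta_j\right),\bSigma_{\bY|j},\alpha_{\bY|j},\eta_{\bY|j}\right)$, which is the mixture of contaminated Gaussian regression models \eqref{eq:mixtures of contaminated Gaussian regressions}. There is essentially no obstacle here: the argument is a direct algebraic simplification, and the only fact invoked beyond the hypotheses is the normalization $\sum_{h=1}^k\pi_h=1$ built into the definition of the model.
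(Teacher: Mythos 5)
Your proof is correct and follows essentially the same route as the paper's: under the stated equalities the covariate density $f\left(\bx;\bmu_{\bX},\bSigma_{\bX},\alpha_{\bX},\eta_{\bX}\right)$ is common to all components, cancels between numerator and denominator of each weight in \eqref{eq:conditional from a CN CWM}, and the weights collapse to $\pi_j$ since $\sum_{h=1}^k\pi_h=1$, recovering \eqref{eq:mixtures of contaminated Gaussian regressions}. Nothing further is needed.
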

\begin{proof}
A proof of this proposition is provided in \ref{app:Proof of Proposition 1}.
\end{proof}

\section[Identifiability]{Identifiability}
\label{sec:Identifiability}

Before outlining maximum likelihood (ML) parameter estimation for model~\eqref{eq:contaminated Gaussian CWM}, it is important to establish its identifiability, that is, two sets of parameters in the model, say $\bvartheta$ and $\widetilde{\bvartheta}$, which do not agree after permutation cannot yield the same mixture distribution.
Identifiability is a necessary requirement, \textit{inter alia}, for the usual asymptotic theory to hold for ML estimation of the model parameters (cf.\ Section~\ref{sec:Maximum likelihood estimation}).

General conditions for the identifiability of mixtures of (linear Gaussian) regression models with fixed and random covariates are provided in \citet{Henn:Iden:2000}.
A sufficient condition for the identifiability of the mixture of contaminated Gaussian distributions is given in \citet{Punz:McNi:Robu:2013}.
These results will be used in Proposition~\ref{pro:1} to show that model~\eqref{eq:contaminated Gaussian CWM} is identifiable provided that all pairs $\left(\bbeta_j,\bSigma_{\bY|j}\right)$, $j=1,\ldots,k$, are pairwise distinct.
Note that, the positivity of all the weights $\pi_j$ avoids nonidentifiability due to empty components \citep[see][Section~1.3.3 for details]{Fruh:Fine:2006}.
\begin{pro}
\label{pro:1}
Let
\begin{displaymath}
p\left(\bx,\by;\bvartheta\right)=\sum_{j=1}^k\pi_j
f\left(\by;\bmu_{\bY}\left(\bx;\bbeta_j\right),\bSigma_{\bY|j},\alpha_{\bY|j},\eta_{\bY|j}\right)
f\left(\bx;\bmu_{\bX|j},\bSigma_{\bX|j},\alpha_{\bX|j},\eta_{\bX|j}\right).
\end{displaymath}
and 
\begin{displaymath}
p\left(\bx,\by;\widetilde{\bvartheta}\right)=\sum_{s=1}^{\widetilde{k}}\widetilde{\pi}_s
f\left(\by;\bmu_{\bY}\left(\bx;\widetilde{\bbeta}_s\right),\widetilde{\bSigma}_{\bY|s},\widetilde{\alpha}_{\bY|s},\widetilde{\eta}_{\bY|s}\right)
f\left(\bx;\widetilde{\bmu}_{\bX|s},\widetilde{\bSigma}_{\bX|s},\widetilde{\alpha}_{\bX|s},\widetilde{\eta}_{\bX|s}\right).
\end{displaymath}
be two different parameterizations of the contaminated Gaussian CWM given in \eqref{eq:contaminated Gaussian CWM}. 
If $j\neq l$, with $j,l\in\left\{1,\ldots,k\right\}$, implies
\begin{equation}
\left\|\bbeta_j-\bbeta_l\right\|_2^2+\left\|\bSigma_{\bY|j}-a\bSigma_{\bY|l}\right\|_2^2\neq 0
\label{eq:sufficient condition}
\end{equation}
for all $a>0$, where $\left\| \cdot \right\|_2$ is the Froebenius norm, then the equality $p\left(\bx,\by;\bvartheta\right)=p\left(\bx,\by;\widetilde{\bvartheta}\right)$, for almost all $\bx\in\real^{d_{\bX}}$, implies that $k=\widetilde{k}$ and also implies that for each $j\in\left\{1,\ldots,k\right\}$ there exists an $s\in\left\{1,\ldots,k\right\}$ such that $\pi_j=\widetilde{\pi}_s$, $\alpha_{\bX|j}=\widetilde{\alpha}_{\bX|s}$, $\bmu_{\bX|j}=\widetilde{\bmu}_{\bX|s}$, $\bSigma_{\bX|j}=\widetilde{\bSigma}_{\bX|s}$, $\eta_{\bX|j}=\widetilde{\eta}_{\bX|s}$, $\alpha_{\bY|j}=\widetilde{\alpha}_{\bY|s}$, $\bbeta_j=\widetilde{\bbeta}_s$, $\bSigma_{\bY|j}=\widetilde{\bSigma}_{\bY|s}$, and $\eta_{\bY|j}=\widetilde{\eta}_{\bY|s}$.
\end{pro}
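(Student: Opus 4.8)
The plan is to combine the two identifiability results quoted above: the conditions of \citet{Henn:Iden:2000} for mixtures of linear regressions with random covariates, used to pin down the regression structure and the associated $\bx$-weight functions, and the identifiability of a single contaminated Gaussian distribution (the $k=1$ instance of the result in \citealp{Punz:McNi:Robu:2013}), used to recover the covariate parameters. Throughout, the constraint $\eta>1$ will be exploited to break labelling ambiguities between inflated and non-inflated dispersion.

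First I would expand both parameterizations as in \eqref{eq:rewrite contaminated CWM}, so that each becomes a finite mixture of Gaussian regressions with random covariates. Collecting the terms that share the same local regression, the density $p\left(\bx,\by;\bvartheta\right)$ is governed by at most $2k$ distinct conditional laws $\phi\left(\by;\bbeta_j'\bx^*,\bSigma_{\bY|j}\right)$ and $\phi\left(\by;\bbeta_j'\bx^*,\eta_{\bY|j}\bSigma_{\bY|j}\right)$, $j=1,\dots,k$. The core observation is that hypothesis \eqref{eq:sufficient condition} is exactly what makes these $2k$ conditional laws pairwise distinct: within a component the factor $\eta_{\bY|j}>1$ separates the two, while for $j\neq l$ the impossibility of $\bbeta_j=\bbeta_l$ together with $\bSigma_{\bY|j}=a\bSigma_{\bY|l}$ for any $a>0$ rules out coincidences between any inflated or non-inflated pair across components, the relevant scalings being $a\in\left\{1,\eta_{\bY|l},1/\eta_{\bY|j},\eta_{\bY|l}/\eta_{\bY|j}\right\}$. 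With this distinctness established, I would invoke \citet{Henn:Iden:2000} to conclude that the equality of the two joint densities forces the two collections of distinct regressions to coincide and, for each matched regression, the corresponding $\bx$-weight function to coincide as a function of $\bx$.

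Next I would translate the matching of regressions into matching of the original components. Because $\eta_{\bY|j},\widetilde{\eta}_{\bY|s}>1$, the crossed identification $\bSigma_{\bY|j}=\widetilde{\eta}_{\bY|s}\widetilde{\bSigma}_{\bY|s}$ and $\eta_{\bY|j}\bSigma_{\bY|j}=\widetilde{\bSigma}_{\bY|s}$ would entail $\eta_{\bY|j}\widetilde{\eta}_{\bY|s}=1$, which is impossible; hence each matched pair satisfies $\bbeta_j=\widetilde{\bbeta}_s$, $\bSigma_{\bY|j}=\widetilde{\bSigma}_{\bY|s}$, and $\eta_{\bY|j}=\widetilde{\eta}_{\bY|s}$, and in particular $k=\widetilde{k}$. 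For the matched non-inflated regression the weight function equals $\pi_j\alpha_{\bY|j}f\left(\bx;\bmu_{\bX|j},\bSigma_{\bX|j},\alpha_{\bX|j},\eta_{\bX|j}\right)$ and for the inflated one it equals $\pi_j\left(1-\alpha_{\bY|j}\right)f\left(\bx;\bmu_{\bX|j},\bSigma_{\bX|j},\alpha_{\bX|j},\eta_{\bX|j}\right)$; integrating these identities over $\bx$ (each $f$ being a density) gives $\pi_j\alpha_{\bY|j}=\widetilde{\pi}_s\widetilde{\alpha}_{\bY|s}$ and $\pi_j\left(1-\alpha_{\bY|j}\right)=\widetilde{\pi}_s\left(1-\widetilde{\alpha}_{\bY|s}\right)$, whence $\pi_j=\widetilde{\pi}_s$ and $\alpha_{\bY|j}=\widetilde{\alpha}_{\bY|s}$.

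Finally, cancelling these scalars leaves the identity $f\left(\bx;\bmu_{\bX|j},\bSigma_{\bX|j},\alpha_{\bX|j},\eta_{\bX|j}\right)=f\left(\bx;\widetilde{\bmu}_{\bX|s},\widetilde{\bSigma}_{\bX|s},\widetilde{\alpha}_{\bX|s},\widetilde{\eta}_{\bX|s}\right)$, an equality of two single contaminated Gaussian densities. Applying the identifiability of the contaminated Gaussian distribution from \citet{Punz:McNi:Robu:2013}, again using $\eta>1$ to distinguish the inflated component from the reference one, yields $\bmu_{\bX|j}=\widetilde{\bmu}_{\bX|s}$, $\bSigma_{\bX|j}=\widetilde{\bSigma}_{\bX|s}$, $\alpha_{\bX|j}=\widetilde{\alpha}_{\bX|s}$, and $\eta_{\bX|j}=\widetilde{\eta}_{\bX|s}$, which completes the proof. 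I expect the main obstacle to be the first step: verifying that the covariate-weight functions arising here, which are themselves mixtures of Gaussians in $\bx$ rather than single Gaussians, satisfy the linear-independence hypotheses under which \citet{Henn:Iden:2000} guarantees that matched regressions carry matched weight functions; once that is secured, the remaining steps are essentially bookkeeping driven by the constraints $\eta>1$.
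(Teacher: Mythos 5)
Your proposal is correct in substance but takes a genuinely different route from the paper's. The paper never expands the contaminated components: it integrates out $\by$, forms the conditional density of $\bY|\bx$, which for each fixed $\bx$ is a $k$-component mixture of contaminated Gaussian distributions with $\bx$-dependent weights, and then invokes the mixture-level identifiability result of \citet{Punz:McNi:Robu:2013} pointwise on a full-measure set $\mathcal{X}$ (the complement of the finitely many hyperplanes on which distinct $\bbeta$'s yield equal means --- this set construction is the only thing borrowed from \citet{Henn:Iden:2000}; his ``scheme'', not a theorem, is used). That single invocation delivers $k=\widetilde{k}$ and the matching of $\alpha_{\bY|j}$, $\bbeta_j$, $\bSigma_{\bY|j}$, $\eta_{\bY|j}$ and of the $\bx$-dependent weights in one stroke; the remainder (cancelling the weights, integrating over $\bx$ to get $\pi_j=\widetilde{\pi}_s$, then identifiability of a single contaminated Gaussian for the $\bX$-parameters) coincides with your last two paragraphs. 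You instead decompose each side via \eqref{eq:rewrite contaminated CWM} into a $2k$-component Gaussian regression mixture and re-derive the contaminated matching by hand: your verification that condition \eqref{eq:sufficient condition} separates the $2k$ conditional laws at the four scalings $a\in\left\{1,\eta_{\bY|l},1/\eta_{\bY|j},\eta_{\bY|l}/\eta_{\bY|j}\right\}$, and your $\eta_{\bY|j}\widetilde{\eta}_{\bY|s}=1$ contradiction for crossed pairs, together with the integration of the two weight identities $\pi_j\alpha_{\bY|j}=\widetilde{\pi}_s\widetilde{\alpha}_{\bY|s}$ and $\pi_j\left(1-\alpha_{\bY|j}\right)=\widetilde{\pi}_s\left(1-\widetilde{\alpha}_{\bY|s}\right)$, are exactly the content the paper outsources to the cited lemma. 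Your route is more self-contained (it ultimately rests only on classical finite Gaussian mixture identifiability) and it makes transparent why \eqref{eq:sufficient condition} has precisely the form it does; the paper's route is shorter because the contaminated-mixture lemma absorbs that bookkeeping.

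Two caveats. First, the step you flag as the main obstacle should not be resolved by citing \citet{Henn:Iden:2000} as a black box, since his hypotheses are not stated for weight functions that are themselves two-component Gaussian mixtures in $\bx$; the clean fix is the paper's own device: fix $\bx\in\mathcal{X}$ and apply finite-Gaussian-mixture identifiability in $\by$, so that at each such $\bx$ components and weights match; since there are finitely many components, some matching persists on a set of positive measure, which forces $\bbeta_j=\widetilde{\bbeta}_s$ (a positive-measure subset of $\real^{d_{\bX}}$ cannot lie in a hyperplane) and hence matched weight functions almost everywhere. Second, a subtlety your proof shares with the paper's: to exclude ``split'' matchings --- the non-inflated law of component $j$ matching component $s$ while its inflated law matches a different component $t$, which would require $\widetilde{\bbeta}_s=\widetilde{\bbeta}_t$ with proportional $\widetilde{\bSigma}_{\bY}$'s and can genuinely produce $\widetilde{k}\neq k$ (e.g., a single contaminated component can be re-split into two tilde components sharing $\bbeta$, $\bSigma_{\bY}$, $\eta_{\bY}$ but with different $\widetilde{\alpha}_{\bY}$'s averaging to $\alpha_{\bY}$) --- condition \eqref{eq:sufficient condition} must be understood as holding for both parameterizations, i.e., identifiability within the class of models satisfying it. Your argument, like the paper's appeal to \citet{Punz:McNi:Robu:2013}, implicitly assumes this; with that reading, your proof goes through.
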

\begin{proof}
A proof of this proposition is provided in \ref{app:Proof of Proposition 2}.
\end{proof}

\section{Maximum likelihood estimation}
\label{sec:Maximum likelihood estimation}

\subsection{An ECM algorithm}

Let 
$\left(\bx_1,\by_1\right),\ldots,\left(\bx_n,\by_n\right)$
be a sample from model~\eqref{eq:contaminated Gaussian CWM}.
To find ML estimates for the parameters of this model, we adopt the expectation-conditional maximization (ECM) algorithm of \citet{Meng:Rubin:Maxi:1993}. 
The ECM algorithm is a variant of the classical expectation-maximization (EM) algorithm \citep{Demp:Lair:Rubi:Maxi:1977}, which is a natural approach for ML estimation when data are incomplete. 
In our case, there are three sources of incompleteness.
The first source, the classical one in the use of mixture models, arise from the fact that for each observation we do not know its component membership; this source is governed by an indicator vector $\bz_i=\left(z_{i1},\ldots,z_{ik}\right)$, where $z_{i1}=1$ if $\left(\bx_i,\by_i\right)$ comes from component $j$ and $z_{ij}=0$ otherwise.
The other two sources, which are specific for this model, arise from the fact that for each observation we do not know if it is an outlier and/or a leverage point with reference to component $j$ (cf. \tablename~\ref{tab:Atypical observation labelling}).
To denote these sources of incompleteness, we use $\bu_i=\left(u_{i1},\ldots,u_{ik}\right)$, where $u_{ij}=1$ if $\left(\bx_i,\by_i\right)$ is not an outlier in component $j$ and $u_{ij}=0$ otherwise, and $\bv_i=\left(v_{i1},\ldots,v_{ik}\right)$, where $v_{ij}=1$ if $\left(\bx_i,\by_i\right)$ is not a leverage point in component $j$ and $v_{ij}=0$ otherwise. 
Therefore, complete-data likelihood can be written
\begin{align*}
L_c\left(\bvartheta\right)=\prod_{i=1}^n\prod_{j=1}^k\Bigg\{ & \pi_j    
\left[\alpha_{\bX|j}\phi\left(\bx;\bmu_{\bX|j},\bSigma_{\bX|j}\right)\right]^{v_{ij}}\left[\left(1-\alpha_{\bX|j}\right)\phi\left(\by;\bmu_{\bX|j},\eta_{\bX|j}\bSigma_{\bX|j}\right)\right]^{1-v_{ij}}\nonumber\\
& 
\times
\left[\alpha_{\bY|j}\phi\left(\by;\bmu_{\bY}\left(\bx_i;\bbeta_j\right),\bSigma_{\bY|j}\right)\right]^{u_{ij}}\left[\left(1-\alpha_{\bY|j}\right)\phi\left(\by;\bmu_{\bY}\left(\bx_i;\bbeta_j\right),\eta_{\bY|j}\bSigma_{\bY|j}\right)\right]^{1-u_{ij}}
\Bigg\}^{z_{ij}}.
\end{align*}
Therefore, the complete-data log-likelihood, which is the core of the algorithm, becomes
\begin{equation}
l_c\left(\bvartheta\right)
=l_{1c}\left(\bpi\right)
+l_{2c}\left(\balpha_{\bX}\right)
+l_{3c}\left(\bmu_{\bX},\bSigma_{\bX},\boldsymbol{\eta}_{\bX}\right)
+l_{4c}\left(\balpha_{\bY}\right)
+l_{5c}\left(\bbeta,\bSigma_{\bY},\boldsymbol{\eta}_{\bY}\right),
\label{eq:complete-data log-likelihood}
\end{equation}
where $\bpi=\left(\pi_1,\ldots,\pi_k\right)$, $\bmu_{\bX}=\left(\bmu_{\bX|1},\ldots,\bmu_{\bX|k}\right)$, $\bSigma_{\bX}=\left(\bSigma_{\bX|1},\ldots,\bSigma_{\bX|k}\right)$, $\balpha_{\bX}=\left(\alpha_{\bX|1},\ldots,\alpha_{\bX|k}\right)$, $\boldsymbol{\eta}_{\bX}=\left(\eta_{\bX|1},\ldots,\eta_{\bX|k}\right)$, $\bbeta=\left(\bbeta_1,\ldots,\bbeta_k\right)$, $\bSigma_{\bY}=\left(\bSigma_{\bY|1},\ldots,\bSigma_{\bY|k}\right)$, $\balpha_{\bY}=\left(\alpha_{\bY|1},\ldots,\alpha_{\bY|k}\right)$, $\boldsymbol{\eta}_{\bY}=\left(\eta_{\bY|1},\ldots,\eta_{\bY|k}\right)$, 
\begin{align*}
&l_{1c}\left(\bpi\right)=\sum_{i=1}^{n}\sum_{j=1}^{k}{z}_{ij}\ln \pi_j,\qquad\qquad
l_{2c}\left(\balpha_{\bY}\right)=\sum_{i=1}^{n}\sum_{j=1}^{k}z_{ij}\left[v_{ij}\ln \alpha_{\bX|j}+\left(1-v_{ij}\right)\ln \left(1-\alpha_{\bX|j}\right)\right],\\
&l_{3c}\left(\bmu_{\bX},\bSigma_{\bX},\boldsymbol{\eta}_{\bX}\right)=-\frac{1}{2}\sum_{i=1}^n\sum_{j=1}^k\Biggl\{z_{ij}\ln\left|\bSigma_{\bX|j}\right|+d_{\bX}z_{ij}\left(1-v_{ij}\right)\ln\eta_{\bX|j}+z_{ij}\left(v_{ij}+\frac{1-v_{ij}}{\eta_{\bX|j}}\right)\delta\left(\bx_i,\bmu_{\bX|j};\bSigma_{\bX|j}\right)\Biggr\},\\
&l_{4c}\left(\balpha_{\bY}\right)=\sum_{i=1}^{n}\sum_{j=1}^{k}z_{ij}\left[u_{ij}\ln \alpha_{\bY|j}+\left(1-u_{ij}\right)\ln \left(1-\alpha_{\bY|j}\right)\right],\\
&l_{5c}\left(\bbeta,\bSigma_{\bY},\boldsymbol{\eta}_{\bY}\right)=-\frac{1}{2}\sum_{i=1}^n\sum_{j=1}^k\Biggl\{z_{ij}\ln\left|\bSigma_{\bY|j}\right|+d_{\bY}z_{ij}\left(1-u_{ij}\right)\ln\eta_{\bY|j}+z_{ij}\left(u_{ij}+\frac{1-u_{ij}}{\eta_{\bY|j}}\right)\delta\left(\bx_i,\bmu_{\bY}\left(\bx_i;\bbeta_j\right);\bSigma_{\bY|j}\right)\Biggr\},
\end{align*}
and where $\delta\left(\bw,\bmu;\bSigma\right)=\left(\bw-\bmu\right)'\bSigma^{-1}\left(\bw-\bmu\right)$ denotes the squared Mahalanobis distance between $\bw$ and $\boldsymbol{\mu}$, with covariance matrix $\boldsymbol{\Sigma}$.
The ECM algorithm iterates between three steps, an E-step and two CM-steps, until convergence. 
The only difference from the EM algorithm is that each M-step is replaced by two simpler CM-steps. 
They arise from the partition $\bvartheta=\left(\bvartheta_1,\bvartheta_2\right)$, where $\bvartheta_1=\left(\bpi,\bmu_{\bX},\bSigma_{\bX},\balpha_{\bX},\bbeta,\bSigma_{\bY},\balpha_{\bY}\right)$ and $\bvartheta_2=\left(\boldsymbol{\eta}_{\bX},\boldsymbol{\eta}_{\bY}\right)$.

\subsubsection{E-step.}
\label{subsubsec:E-step}

The E-step, on the $\left(r+1\right)$th iteration of the ECM algorithm, requires the calculation of $Q(\bvartheta|\bvartheta^{\left(r\right)})$, the current conditional expectation of $l_c\left(\bvartheta\right)$.
To do this, we need to calculate $E_{\bvartheta^{\left(r\right)}}\left(Z_{ij}|\bx_i,\by_i\right)$, $E_{\bvartheta^{\left(r\right)}}\left(V_{ij}|\bx_i,\bz_i\right)$, and $E_{\bvartheta^{\left(r\right)}}\left(U_{ij}|\by_i,\bz_i\right)$, $i=1,\ldots,n$ and $j=1,\ldots,k$.
They are respectively given by
\begin{displaymath}
E_{\bvartheta^{\left(r\right)}}\left(Z_{ij}|\bx_i,\by_i\right)=\frac{\pi_j^{\left(r\right)}f\left(\by_i;\bmu_{\bY}\left(\bx_i;\bbeta_j^{\left(r\right)}\right),\bSigma_{\bY|j}^{\left(r\right)},\alpha_{\bY|j}^{\left(r\right)},\eta_{\bY|j}^{\left(r\right)}\right)
f\left(\bx_i;\bmu_{\bX|j}^{\left(r\right)},\bSigma_{\bX|j}^{\left(r\right)},\alpha_{\bX|j}^{\left(r\right)},\eta_{\bX|j}^{\left(r\right)}\right)}{p\left(\bx_i,\by_i;\bvartheta^{\left(r\right)}\right)}\eqqcolon z_{ij}^{\left(r\right)},
\end{displaymath}
\begin{equation}
E_{\bvartheta^{\left(r\right)}}\left(V_{ij}|\bx_i,\bz_i\right)=\frac{\alpha_{\bX|j}^{\left(r\right)}\phi\left(\bx_i;\bmu_{\bX|j}^{\left(r\right)},\bSigma_{\bX|j}^{\left(r\right)}\right)}{f\left(\bx_i;\bmu_{\bX|j}^{\left(r\right)},\bSigma_{\bX|j}^{\left(r\right)},\alpha_{\bX|j}^{\left(r\right)},\eta_{\bX|j}^{\left(r\right)}\right)}\eqqcolon v_{ij}^{\left(r\right)},
\label{eq:v update}	
\end{equation}
and
\begin{equation}
E_{\bvartheta^{\left(r\right)}}\left(U_{ij}|\by_i,\bz_i\right)=\frac{\alpha_{\bY|j}^{\left(r\right)}\phi\left(\by_i;\bmu_{\bY}\left(\bx_i;\bbeta_j^{\left(r\right)}\right),\bSigma_{\bY|j}^{\left(r\right)}\right)}{f\left(\by_i;\bmu_{\bY}\left(\bx_i;\bbeta_j^{\left(r\right)}\right),\bSigma_{\bY|j}^{\left(r\right)},\alpha_{\bY|j}^{\left(r\right)},\eta_{\bY|j}^{\left(r\right)}\right)}\eqqcolon u_{ij}^{\left(r\right)}.	
\label{eq:u update}	
\end{equation}
Then, by substituting $z_{ij}$ with $z_{ij}^{\left(r\right)}$, $v_{ij}$ with $v_{ij}^{\left(r\right)}$, and $u_{ij}$ with $u_{ij}^{\left(r\right)}$ in \eqref{eq:complete-data log-likelihood}, we obtain $Q\left(\bvartheta|\bvartheta^{\left(r\right)}\right)$; see \ref{app:Updates} for details.

\subsubsection{CM-step 1.}
\label{subsubsec:CM-step 1}

The first CM-step on the $\left(r+1\right)$th iteration of the ECM algorithm requires the calculation of $\bvartheta_1^{\left(r+1\right)}$ as the value of $\bvartheta_1$ that maximizes $Q\left(\bvartheta_1|\bvartheta^{\left(r\right)}\right)$ with $\bvartheta_2$ fixed at $\bvartheta_2^{\left(r\right)}$.
In particular, after some algebra, we obtain
\begin{align}
\pi_j^{\left(r+1\right)} & =\frac{n_j^{\left(r\right)}}{n},\qquad\qquad
\alpha_{\bX|j}^{\left(r+1\right)} =\frac{1}{n_j^{\left(r\right)}}\sum_{i=1}^n z_{ij}^{\left(r\right)}v_{ij}^{\left(r\right)},\nonumber\\
\bmu_{\bX|j}^{\left(r+1\right)} & =\frac{1}{s_j^{\left(r\right)}}\sum_{i=1}^n{z}_{ij}^{\left(r\right)}\left(v_{ij}^{\left(r\right)}+\frac{1-v_{ij}^{\left(r\right)}}{\eta_{\bX|j}^{\left(r\right)}}\right)\bx_i,\label{eq:muX update}\\
\bSigma_{\bX|j}^{\left(r+1\right)} &=\frac{1}{n_j^{\left(r\right)}}\sum_{i=1}^nz_{ij}^{\left(r\right)}\left(v_{ij}^{\left(r\right)}+\frac{1-v_{ij}^{\left(r\right)}}{\eta_{\bX|j}^{\left(r\right)}}\right)\left(\bx_i-\displaystyle\bmu_{\bX|j}^{\left(r+1\right)}\right)\left(\bx_i-\displaystyle\bmu_{\bX|j}^{\left(r+1\right)}\right)',\label{eq:SigmaX update}\\	
\alpha_{\bY|j}^{\left(r+1\right)} & =\frac{1}{n_j^{\left(r\right)}}\sum_{i=1}^n{z}_{ij}^{\left(r\right)}\left(v_{ij}^{\left(r\right)}+\frac{1-v_{ij}^{\left(r\right)}}{\eta_{\bY|j}^{\left(r\right)}}\right),\nonumber\\
\bbeta_j^{\left(r+1\right)} & = \left[\sum_{i=1}^n z_{ij}^{\left(r\right)}\left(u_{ij}^{\left(r\right)}+\frac{1-u_{ij}^{\left(r\right)}}{\eta_{\bY|j}^{\left(r\right)}}\right)\bx_i^*\bx_i^{*'}\right]^{-1}\left[\sum_{i=1}^n z_{ij}^{\left(r\right)}\left(u_{ij}^{\left(r\right)}+\frac{1-u_{ij}^{\left(r\right)}}{\eta_{\bY|j}^{\left(r\right)}}\right)\bx_i^*\by_i\right],\label{eq:beta update}\\
\bSigma_{\bY|j}^{\left(r+1\right)} &=\frac{1}{n_j^{\left(r\right)}}\sum_{i=1}^nz_{ij}^{\left(r\right)}\left(u_{ij}^{\left(r\right)}+\frac{1-u_{ij}^{\left(r\right)}}{\eta_{\bY|j}^{\left(r\right)}}\right)\left[\by_i-\displaystyle\bmu_{\bY}\left(\bx_i;\bbeta_j^{\left(r+1\right)}\right)\right]\left[\by_i-\displaystyle\bmu_{\bY}\left(\bx_i;\bbeta_j^{\left(r+1\right)}\right)\right]',\label{eq:SigmaY update}	
\end{align}
where 
\begin{displaymath}
	s_j^{\left(r\right)}=\sum_{i=1}^nz_{ij}^{\left(r\right)}\left(v_{ij}^{\left(r\right)}+\frac{1-v_{ij}^{\left(r\right)}}{\eta_{\bX|j}^{\left(r\right)}}\right)
\end{displaymath}
and $\displaystyle n_j^{\left(r\right)}=\sum_{i=1}^nz_{ij}^{\left(r\right)}$.
Details on the updates for \eqref{eq:muX update}, \eqref{eq:SigmaX update}, \eqref{eq:beta update}, and \eqref{eq:SigmaY update} are reported in \ref{app:Updates}.

\subsubsection{CM-step 2.}
\label{subsubsec:CM-step 2}

The second CM-step, on the $\left(r+1\right)$th iteration of the ECM algorithm, requires the calculation of $\bvartheta_2^{\left(r+1\right)}$ as the value of $\bvartheta_2$ that maximizes $Q\left(\bvartheta|\bvartheta^{\left(r\right)}\right)$ with $\bvartheta_1$ fixed at $\bvartheta_1^{\left(r+1\right)}$.
In particular, for each $j=1,\ldots,k$, we have to maximize
\begin{equation}
-\frac{d_{\bX}}{2}\sum_{i=1}^nz_{ij}^{\left(r\right)}\left(1-v_{ig}^{\left(r\right)}\right)\ln \eta_{\bX|j}
-\frac{1}{2}\sum_{i=1}^n{z}_{ij}^{\left(r\right)}\frac{1-v_{ij}^{\left(r\right)}}{\eta_{\bX|j}}\delta\left(\bx_i,\bmu_{\bX|j}^{\left(r+1\right)};\bSigma_{\bX|j}^{\left(r+1\right)}\right),
\label{eq:maximization function for etaXj}
\end{equation}
with respect to $\eta_{\bX|j}$, under the constraint $\eta_{\bX|j}>1$, and
\begin{equation}
-\frac{d_{\bY}}{2}\sum_{i=1}^nz_{ij}^{\left(r\right)}\left(1-u_{ig}^{\left(r\right)}\right)\ln \eta_{\bY|j}
-\frac{1}{2}\sum_{i=1}^n{z}_{ij}^{\left(r\right)}\frac{1-u_{ij}^{\left(r\right)}}{\eta_{\bY|j}}\delta\left(\by_i,\bmu_{\bY}\left(\bx_i;\bbeta_j^{\left(r+1\right)}\right);\bSigma_{\bY|j}^{\left(r+1\right)}\right),
\label{eq:maximization function for etaYj}
\end{equation}
with respect to $\eta_{\bY|j}$, under the constraint $\eta_{\bY|j}>1$.
Operationally, the \texttt{optimize()} function in the \texttt{stats} package for \textsf{R} \citep{R} is used to perform a numerical search of the maximum of \eqref{eq:maximization function for etaXj} and \eqref{eq:maximization function for etaYj} over the interval $\left(1,\eta^*\right)$, with $\eta^*>1$.
In the analyses of Section~\ref{sec:Numerical studies}, we fix $\eta^*=500$ to facilitate faster convergence.

\subsection{Computational aspects}
\label{subsec:Computational aspects}

Code for the ECM algorithm was written in \textsf{R} and it is available from the authors upon request.
Further aspects related to the implementation of the algorithm are described in the following.

\subsubsection{Initialization}
\label{subsubsec:Initialization}

The choice of the starting values for EM-based algorithms constitutes an important issue (see, e.g., \citealp{Bier:Cele:Gova:Choo:2003}, \citealp{Karl:Xeka:Choo:2003}, and \citealp{Bagn:Punz:Fine:2013}).
For the ECM algorithm described before, two natural strategies are:  
\begin{enumerate}
	\item choosing initial values $z_{ij}^{\left(0\right)}$, $v_{ij}^{\left(0\right)}$, and $u_{ij}^{\left(0\right)}$, respectively for $z_{ij}$, $v_{ij}$, and $u_{ij}$, $i=1,\ldots,n$ and $j=1,\ldots,k$, in the E-step of the first iteration;
	\item selecting an initial value $\bvartheta^{\left(0\right)}$ for $\bvartheta$ in the two CM-steps of the first iteration.
\end{enumerate}
By considering the first strategy, we suggest the following technique.
The $k$-component Gaussian CWM in \eqref{eq:Gaussian CWM} can be seen as nested in the $k$-component contaminated Gaussian CWM in \eqref{eq:contaminated Gaussian CWM} when $\alpha_{\bX|j},\alpha_{\bY|j}\rightarrow 1^-$ and $\eta_{\bX|j},\eta_{\bY|j}\rightarrow 1^+$, $j=1,\ldots,k$.
Under these conditions, $u_{ij},v_{ij}\rightarrow 1^-$, $i=1,\ldots,n$ and $j=1,\ldots,k$, and model~\eqref{eq:contaminated Gaussian CWM} tends to model~\eqref{eq:Gaussian CWM}.
Then, the posterior probabilities from the EM algorithm for the Gaussian CWM \citep[as described by][]{Dang:Punz:McNi:Ingr:Brow:Mult:2014}, along with the constraints $u_{ij}^{\left(0\right)}=v_{ij}^{\left(0\right)}=w^{\left(0\right)}$, $w^{\left(0\right)}\rightarrow 1^-$, $i=1,\ldots,n$, and $j=1,\ldots,k$, can be used to initialize the first E-step of our ECM algorithm.
From an operational point of view, thanks to the monotonicity property of the ECM algorithm \citep[see, e.g.,][p.~28]{McLa:Kris:TheE:2007}, this also guarantees that the observed-data log-likelihood of the contaminated Gaussian CWM will be always greater than, or equal to, the observed-data log-likelihood of the ``starting'' Gaussian CWM.
This is a fundamental consideration for the use of likelihood-based model selection criteria for choosing between a Gaussian CWM and a contaminated Gaussian CWM.   

In the analyses of Section~\ref{sec:Numerical studies}, $w^{\left(0\right)}=0.999$ and the E-step of the EM algorithm for the Gaussian CWM is initialized based on the posterior probabilities arising from the fitting of an unconstrained $k$-component mixture of Gaussian distributions for $\bW=\left(\bX,\bY\right)$, as implemented by the \texttt{Mclust()} function of the \textbf{mclust} package for {\sf R} \citep{Fral:Raft:Murp:Scru:mclu:2012}.

\subsubsection{Convergence criterion}
\label{subsubsec:Convergence criterion}

The Aitken acceleration \citep{Aitk:OnBe:1926} is used to estimate the asymptotic maximum of the log-likelihood at each iteration of the ECM algorithm. 
Based on this estimate, we can decide whether or not the algorithm has reached convergence;
i.e., whether or not the log-likelihood is sufficiently close to its estimated asymptotic value. 
The Aitken acceleration at iteration $r+1$ is given by
\begin{equation*}
a^{\left(r+1\right)}=\frac{l^{\left(r+2\right)}-l^{\left(r+1\right)}}{l^{\left(r+1\right)}-l^{\left(r\right)}},
\end{equation*}
where $l^{\left(r\right)}$ is the observed-data log-likelihood value from iteration $r$. 
Then, the asymptotic estimate of the log-likelihood at iteration $r + 2$ is given by
\begin{equation*}
l_{\infty}^{\left(r+2\right)}=l^{\left(r+1\right)}+\frac{1}{1-a^{\left(r+1\right)}}\left(l^{\left(r+2\right)}-l^{\left(r+1\right)}\right);
\end{equation*}
cf.\ \citet{Bohn:Diet:Scha:Schl:Lind:TheD:1994}.
The ECM algorithm can be considered to have converged when $l_{\infty}^{\left(r+2\right)}-l^{\left(r+1\right)}<\epsilon$. 
In the analyses of Section~\ref{sec:Numerical studies}, $\epsilon=0.0001$.

\section{Operational aspects}
\label{sec:fas}

\subsection{Some notes on robustness}
\label{subsec:Some notes on robustness}

Based on \eqref{eq:muX update}, $\bmu_{\bX|j}^{\left(r+1\right)}$ is a weighted mean of the $\bx_i$ values, with weights depending on 
\begin{equation}
v_{ij}^{\left(r\right)}+\frac{1-v_{ij}^{\left(r\right)}}{\eta_{\bX|j}^{\left(r\right)}}.
\label{eq:downweight for X}
\end{equation}
Analogously, based on \eqref{eq:beta update}, the regression coefficients $\bbeta_j^{\left(r+1\right)}$ can be considered a weighted least squares estimate with weights depending on 
\begin{equation}
u_{ij}^{\left(r\right)}+\frac{1-u_{ij}^{\left(r\right)}}{\eta_{\bY|j}^{\left(r\right)}}.
\label{eq:downweight for Y}
\end{equation}
It is easy to note that \eqref{eq:downweight for X} and \eqref{eq:downweight for Y} have the same structure.
Based on \eqref{eq:v update} and \eqref{eq:u update}, also the structure of the updates for $v_{ij}^{\left(r\right)}$ and $u_{ij}^{\left(r\right)}$ is the same.
Now, consider these updates as a function of the squared Mahalanobis distance (i.e., the squared standardized residuals) $\delta$; the common updating function in \eqref{eq:v update} and \eqref{eq:u update} can be so written as
\begin{equation}
g\left(\delta;\alpha,\eta\right)=\frac{\alpha\exp\left\{-\frac{\delta}{2}\right\}}{\alpha\exp\left\{-\frac{\delta}{2}\right\}+\frac{\left(1-\alpha\right)}{\sqrt{\eta}}\exp\left\{-\frac{\delta}{2\eta}\right\}}=\frac{1}{1+\frac{\left(1-\alpha\right)}{\alpha}\frac{1}{\sqrt{\eta}}\exp\left\{\frac{\delta}{2}\left(1-\frac{1}{\eta}\right)\right\}},
\label{eq:updating function for u and v}
\end{equation}
with $\delta\geq 0$.
Due to the constraint $\eta>1$, from the last expression of \eqref{eq:updating function for u and v} it is straightforward to realize that $g\left(\delta;\alpha,\eta\right)$ is a decreasing function of $\delta$. 
Based on \eqref{eq:updating function for u and v}, formulas \eqref{eq:downweight for X} and \eqref{eq:downweight for Y} can be written as  
\begin{equation}
w\left(\delta;\alpha,\eta\right)=g\left(\delta;\alpha,\eta\right)+\frac{1-g\left(\delta;\alpha,\eta\right)}{\eta}=\frac{1}{\eta}\left[1+\left(\eta-1\right)g\left(\delta;\alpha,\eta\right)\right].
\label{eq:component of the down-weighting}
\end{equation}
From the last expression of \eqref{eq:component of the down-weighting}, it easy to realize that $w\left(\delta;\alpha,\eta\right)$ is an increasing function of $g\left(\delta;\alpha,\eta\right)$; this also means that $w\left(\delta;\alpha,\eta\right)$ is a decreasing function of $\delta$.
Therefore, the weights in \eqref{eq:downweight for X} and \eqref{eq:downweight for Y} reduce, respectively, the effect of leverage points in the estimation of $\bmu_{\bX|j}$ and the effect of outliers in the estimation of $\bbeta_j$, so providing a robust way to estimate $\bmu_{\bX|j}$ and $\bbeta_j$, $j=1,\ldots,k$.
In addition, from \eqref{eq:SigmaX update} and \eqref{eq:SigmaY update}, the larger squared residuals $\delta$ also have smaller effects on $\bSigma_{\bX|j}$ and $\bSigma_{\bY|j}$, $j=1,\ldots,k$, due to the weights in \eqref{eq:downweight for X} and \eqref{eq:downweight for Y}, respectively. 
See \citet{Litt:Robu:1988} for a discussion on down-weighting of the atypical observations for the contaminated Gaussian distribution. 


\subsection{Automatic detection of atypical points}
\label{subsec:Automatic detection of noise}

For a contaminated Gaussian CWM, the classification of an observation $\left(\bx_i,\by_i\right)$ means: 
\begin{description}
	\item[Step 1.] determine its component of membership;
	\item[Step 2.] establish if it is typical, outlier, good leverage, or bad leverage in that component (cf.~\tablename~\ref{tab:Atypical observation labelling}).
\end{description}
Let $\hat{\bu}_i$, $\hat{\bv}_i$, and $\hat{\bz}_i$ denote, respectively, the expected values of $\bu_i$, $\bv_i$, and $\bz_i$ arising from the ECM algorithm, i.e., $\hat{u}_{ij}$, $\hat{v}_{ij}$, and $\hat{z}_{ij}$ are the values of $u_{ij}$, $v_{ij}$, and $z_{ij}$, respectively, at convergence.
To evaluate the component membership of $\left(\bx_i,\by_i\right)$, we use the maximum \textit{a~posteriori} probabilities (MAP) operator
$$
\text{MAP}\left(\hat{z}_{ij}\right)=
\begin{cases}
1 & \text{if } \max_h\{\hat{z}_{ih}\} \text{ occurs in component $h=j$,}\\
0 & \text{otherwise}.
\end{cases}
$$
We then consider $\hat{u}_{ih}$ and $\hat{v}_{ih}$, where $h$ is selected such that $\text{MAP}\left(\hat{z}_{ih}\right)=1$.
Although $\left(1-\hat{u}_{ih}\right)$ and $\left(1-\hat{v}_{ih}\right)$ provide the richest information about the probability that $\left(\bx_i,\by_i\right)$ is an outlier or a leverage point, respectively, in group $h$, the user could be interested in obtaining a classification of this observation according to \tablename~\ref{tab:Atypical observation labelling}.
In such a case, the rule given in \tablename~\ref{tab:practical detection} could be applied.
\begin{table}[!ht]
\caption{
\label{tab:practical detection}
Rule for classifying a generic observation $\left(\bx_i,\by_i\right)$ in one of the four categories of \tablename~\ref{tab:Atypical observation labelling}.
}
\centering
\begin{tabular}{ccc}
\toprule
\backslashbox{$\hat{u}_{ih}$}{$\hat{v}_{ih}$} 	&	$\left[0,0.5\right)$	&	$\left[0.5,1\right]$ \\
\midrule															
$\left[0,0.5\right)$    &  bad leverage   & outlier \\
$\left[0.5,1\right]$    &  good leverage  & typical (bulk of the data)  \\
\bottomrule	
\end{tabular}
\end{table} 

Thus, once the observation has been classified in one of the $k$ groups, the approach reveals richer information about the role of that observation in that group. 
Note also that, the resulting information from \tablename~\ref{tab:practical detection} can be used to eventually eliminate some of the atypical observations (such as outliers and bad leverage points) if such an outcome is desired \citep{Berk:Bent:Esti:1988}. 


\subsection{Constraints for detection of atypical points}
\label{subsec:Constraints for detection of outliers}

When the contaminated Gaussian CWM is used for detection of atypical points in each group, $\left(1-\alpha_{\bX|j}\right)$ and $\left(1-\alpha_{\bY|j}\right)$ represent the proportion of leverage points and outliers, respectively.
As suggested by \citet{Punz:McNi:Robu:2013}, for these parameters one could require that in the $j$th group, $j=1,\ldots,k$, the proportion of typical observations, with respect to $\bX$ and $\bY$, separately, is at least equal to a pre-determined value $\alpha^*$. 
In this case, the \texttt{optimize()} function is also used for a numerical search of the maximum $\alpha_{\bX|j}^{\left(r+1\right)}$, over the interval $\left(\alpha^*,1\right)$, of the function
\begin{displaymath}
\sum_{i=1}^nz_{ij}^{\left(r\right)}\left[v_{ij}^{\left(r\right)}\ln \alpha_{\bX|j}+\left(1-v_{ij}^{\left(r\right)}\right)\ln \left(1-\alpha_{\bX|j}\right)\right],	
\end{displaymath}
and of the maximum $\alpha_{\bY|j}^{\left(r+1\right)}$, over the interval $\left(\alpha^*,1\right)$, of the function
\begin{displaymath}
\sum_{i=1}^nz_{ij}^{\left(r\right)}\left[u_{ij}^{\left(r\right)}\ln \alpha_{\bY|j}+\left(1-u_{ij}^{\left(r\right)}\right)\ln \left(1-\alpha_{\bY|j}\right)\right].	
\end{displaymath}
In the analyses herein (cf. Section~\ref{sec:Numerical studies}), we use this approach to update $\alpha_{\bX|j}$ and $\alpha_{\bY|j}$ and we take $\alpha^*=0.5$.
Note that it is possible to fix $\alpha_{\bX|j}$ and $\alpha_{\bY|j}$ \textit{a~priori}.
This is somewhat analogous to the clusterwise linear regression through trimming approach, where one must to specify the proportion of outliers and leverage points in advance \citep[cf.][]{Garc:Gord:Mayo:SanM:Robu:2010}. 
However, pre-specifying points as outliers and/or leverage \textit{a~priori} may not be realistic in many practical scenarios.

%
%

\subsection{Choosing the number of mixture components}
\label{subsec:BIC}

The contaminated Gaussian CWM, in addition to $\bvartheta$, is also characterized by the number of components $k$. 
Thus far, this quantity has been treated as \textit{a~priori} fixed; nevertheless, for practical purposes, its selection is usually required.
One way (the usual way) to select $k$ is via computation of a convenient (likelihood-based) model selection criterion over a reasonable range of values for $k$, and then choosing the value of $k$ associated with the best value of the adopted criterion.
As in \citet{Punz:McNi:Robu:2013}, in the data analyses of Section~\ref{sec:Numerical studies} we will adopt the Bayesian information criterion \citep{Schw:Esti:1978}, i.e.,
\begin{displaymath}
\text{BIC}=2l\left(\hat{\bvartheta}\right)-m\ln n,
\end{displaymath}
where $m$ is the overall number of free parameters in the model. 

\section{Numerical studies}
\label{sec:Numerical studies}

In this section we evaluate the performance of the proposed model through Monte Carlo experiments performed using \textsf{R}.

\subsection{Evaluation of some properties of the estimators of the local regression coefficients}
\label{subsec:Numerical evaluation of some properties of the model estimators}

Properties of the estimators of the regression coefficients $\bbeta_j$, $j=1,\ldots,k$, are here evaluated through Monte Carlo experiments and compared to the estimators from the Gaussian CWM.
Our main interest is the effect of local atypical points, as conceived by the contaminated Gaussian CWM, on the bias and mean square error (MSE) of the estimators of $\bbeta_j$, $j=1,\ldots,k$, for the Gaussian CWM.

The following two scenarios of experiments are considered:
\begin{description}
	\item[Scenario A:] data generated from the Gaussian CWM;
	\item[Scenario B:] data generated from the contaminated Gaussian CWM.
\end{description}
Regardless from the considered scenario, the dimensions are $d_{\bX}=d_{\bY}=2$ and the number of mixture components is $k=2$.
The generating parameters of Scenario A are
\begin{equation}
	\pi_1=0.3,
	\quad \bmu_{\bX|1}=\begin{pmatrix*}[r]
-5	\\
-5	 
\end{pmatrix*},
	\quad \bSigma_{\bX|1}=\begin{pmatrix*}[r]
1 & 0	\\
0 & 1	 
\end{pmatrix*}
\quad \bbeta_1=\begin{pmatrix*}[r]
-2 & -2	\\
-1 &  1 \\
 1 & -1 
\end{pmatrix*},
	\quad \text{and} \quad \bSigma_{\bY|1}=\begin{pmatrix*}[l]
0.4 & 0	\\
0 & 0.4	 
\end{pmatrix*},
\label{eq:generating parameters in group 1}
\end{equation}
for the first mixture component, and 
\begin{equation}
	\pi_2=0.7,
	\quad \bmu_{\bX|2}=\begin{pmatrix*}[r]
5	\\
5	 
\end{pmatrix*},
	\quad \bSigma_{\bX|2}=\begin{pmatrix*}[r]
1 & 0	\\
0 & 1	 
\end{pmatrix*}
\quad \bbeta_2=\begin{pmatrix*}[r]
2 &  2	\\
1 & -1 \\
-1 & 1 
\end{pmatrix*},
	\quad \text{and} \quad \bSigma_{\bY|2}=\begin{pmatrix*}[l]
0.4 & 0	\\
0 & 0.4	 
\end{pmatrix*},
\label{eq:generating parameters in group 2}
\end{equation}
for the second mixture component. 
For comparison's sake, the same parameters are also used for Scenario B, but with the additional choice of $\alpha_{\bX|1}=\alpha_{\bX|2}=\alpha_{\bY|1}=\alpha_{\bY|2}=0.95$ and $\eta_{\bX|1}=\eta_{\bX|2}=\eta_{\bY|1}=\eta_{\bY|2}=100$.
Two sample sizes are considered: $n=200$ and $n=400$.
Under each scenario, 10,000 replications are considered for each of the two values of $n$; this yields a total of $40,000$ generated data sets.
On each generated data set, both the Gaussian CWM and the contaminated Gaussian CWM are fitted by directly using $k=2$.
The values of the mixture weights in \eqref{eq:generating parameters in group 1} and \eqref{eq:generating parameters in group 2} are chosen to prevent the possible label switching issue (see, e.g., \citealp{Cele:Hurn:Robe:Comp:2000}, \citealp{Step:Deal:2000}, and \citealp{Yao:Mode:2012} for further details about this issue) when the bias and the MSE are computed; the substantial separation between groups helps the algorithms in well-estimating these weights.

The obtained results, in terms of bias and MSE, are summarized in \tablename~\ref{tab:Scenario A} for scenario A, and in \tablename~\ref{tab:Scenario B} for scenario B.
\begin{table}[!ht]
\caption{
Scenario A: estimated biases and MSEs, over 10,000 replications, of the ML estimators of $\bbeta_j$, $j=1,2$, using the Gaussian CWM and the contaminated Gaussian CWM.
\label{tab:Scenario A}
}
\centering
\begin{tabular}{cc c cc c cc}
\toprule
 & & & \multicolumn{2}{c}{Gaussian CWM} & & \multicolumn{2}{c}{Contaminated Gaussian CWM} \\
 & & & $n=200$ & $n=400$ & & $n=200$ & $n=400$ \\
\midrule
Group 1 & Bias & & 
$\begin{pmatrix*}[r]
0.003	& 0.003\\
0.001	& 0.000\\
0.000	& 0.000
\end{pmatrix*}$ 
& 
$\begin{pmatrix*}[r]
0.002	& 0.002\\
0.000	& 0.000\\
0.000	& 0.000
\end{pmatrix*}$ 
&& 
$\begin{pmatrix*}[r]
0.003	& 0.002\\
0.001	& 0.000\\
0.000	& 0.000
\end{pmatrix*}$ 
& 
$\begin{pmatrix*}[r]
0.002	& 0.002\\
0.000	& 0.000\\
0.000	& 0.000
\end{pmatrix*}$
\\[5mm]
& MSE & &  
$\begin{pmatrix*}[r]
0.359	& 0.371\\
0.007	& 0.007\\
0.007 & 0.007
\end{pmatrix*}$ 
& 
$\begin{pmatrix*}[r]
0.177	& 0.177\\
0.004	& 0.004\\
0.003	& 0.003
\end{pmatrix*}$ 
&& 
$\begin{pmatrix*}[r]
0.359	& 0.372\\
0.007	& 0.007\\
0.007	& 0.007
\end{pmatrix*}$ 
& 
$\begin{pmatrix*}[r]
0.177	& 0.177\\
0.004	& 0.004\\
0.003	& 0.003
\end{pmatrix*}$\\[7mm]				
Group 2 & Bias & & 
$\begin{pmatrix*}[r]
-0.001	& -0.001\\
0.000	& 0.000\\
0.001	& 0.000
\end{pmatrix*}$ 
& 
$\begin{pmatrix*}[r]
0.000	& 0.002\\
0.000	& 0.000\\
0.000	& 0.000
\end{pmatrix*}$ 
&& 
$\begin{pmatrix*}[r]
-0.001 & 	-0.001\\
0.000	& 0.000\\
0.001 &  	0.000
\end{pmatrix*}$ 
& 
$\begin{pmatrix*}[r]
0.000	& 0.003\\
0.000	& 0.000\\
0.000	& 0.000
\end{pmatrix*}$
\\[5mm]
        & MSE & &  
$\begin{pmatrix*}[r]
0.148	& 0.148\\
0.003	& 0.003\\
0.003	& 0.003
\end{pmatrix*}$ 
& 
$\begin{pmatrix*}[r]
0.074	& 0.073\\
0.001	& 0.001\\
0.001	& 0.001
\end{pmatrix*}$ 
&& 
$\begin{pmatrix*}[r]
0.148	& 0.148\\
0.003	& 0.003\\
0.003	& 0.003
\end{pmatrix*}$ 
& 
$\begin{pmatrix*}[r]
0.074	& 0.073\\
0.001	& 0.001\\
0.001	& 0.001
\end{pmatrix*}$\\
\bottomrule	
\end{tabular}
\end{table}
\begin{table}[!ht]
\caption{
Scenario B: estimated biases and MSEs, over 10,000 replications, of the ML estimators of $\bbeta_j$, $j=1,2$, using the Gaussian CWM and the contaminated Gaussian CWM.
\label{tab:Scenario B}
}
\centering
\begin{tabular}{cc c cc c cc}
\toprule
 & & & \multicolumn{2}{c}{Gaussian CWM} & & \multicolumn{2}{c}{Contaminated Gaussian CWM} \\
 & & & $n=200$ & $n=400$ & & $n=200$ & $n=400$ \\
\midrule
Group 1 & Bias & & 
$\begin{pmatrix*}[r]
0.000	& 0.001 \\
0.002	& -0.002 \\
-0.002	& 0.002
\end{pmatrix*}$ 
& 
$\begin{pmatrix*}[r]
0.037	& 0.049 \\
0.010	& -0.002 \\
-0.003	& 0.012
\end{pmatrix*}$ 
&& 
$\begin{pmatrix*}[r]
0.002	& -0.007 \\
0.000	& 0.000 \\
0.000	& -0.001
\end{pmatrix*}$ 
& 
$\begin{pmatrix*}[r]
0.000	& 0.002 \\
0.000	& 0.000 \\
0.000	& 0.000
\end{pmatrix*}$
\\[5mm]
& MSE & &  
$\begin{pmatrix*}[r]
1.159	& 1.166 \\
0.021	& 0.020 \\
0.019	& 0.020
\end{pmatrix*}$ 
& 
$\begin{pmatrix*}[r]
0.770	& 0.783 \\
0.014	& 0.011 \\
0.011	& 0.014
\end{pmatrix*}$ 
&& 
$\begin{pmatrix*}[r]
0.250	& 0.239 \\
0.005	& 0.005 \\
0.005	& 0.004
\end{pmatrix*}$ 
& 
$\begin{pmatrix*}[r]
0.067	& 0.071 \\
0.001	& 0.001 \\
0.001	& 0.001
\end{pmatrix*}$\\[7mm]				
Group 2 & Bias & & 
$\begin{pmatrix*}[r]
0.017	& 0.023 \\
0.000	& -0.002 \\
-0.002	& -0.001
\end{pmatrix*}$ 
& 
$\begin{pmatrix*}[r]
0.038	& 0.037 \\
-0.001	& -0.003 \\
-0.004	& -0.001
\end{pmatrix*}$ 
&& 
$\begin{pmatrix*}[r]
-0.005	& 0.001 \\
0.001	& 0.000 \\
0.000	& 0.000
\end{pmatrix*}$ 
& 
$\begin{pmatrix*}[r]
0.002	& 0.000 \\
0.000	& 0.000 \\
0.000	& 0.000
\end{pmatrix*}$
\\[5mm]
        & MSE & &  
$\begin{pmatrix*}[r]
0.258	& 0.259 \\
0.005	& 0.005 \\
0.005	& 0.005
\end{pmatrix*}$ 
& 
$\begin{pmatrix*}[r]
0.123	& 0.128 \\
0.002	& 0.002 \\
0.002	& 0.002
\end{pmatrix*}$ 
&& 
$\begin{pmatrix*}[r]
0.049	& 0.048 \\
0.001	& 0.001 \\
0.001	& 0.001
\end{pmatrix*}$ 
& 
$\begin{pmatrix*}[r]
0.018	& 0.018 \\
0.000	& 0.000 \\
0.000	& 0.000
\end{pmatrix*}$\\
\bottomrule	
\end{tabular}
\end{table}

In all the $40,000$ replications, no convergence problems were observed.
As concerns Scenario A, from \tablename~\ref{tab:Scenario A} it easy to note how the choice of the model has a negligible effect on the estimation of the parameters $\bbeta_1$ and $\bbeta_2$: the biases and MSEs from the two models are practically the same and their values are not substantial (as an example, the maximum obtained absolute value for the bias is 0.003).
These results are not surprising because the generating model is a Gaussian CWM and no local atypical observation is present in any generated data set; in this situation, the contaminated Gaussian CWM tends to the Gaussian CWM.
Finally, for both bias and MSE, it is interesting to note how their values roughly improve with the increase of $n$ and, fixed $n$, with the increase of the size of the considered group (as governed by the values of $\pi_1$ and $\pi_2$).
As concerns Scenario B, the contaminated Gaussian CWM provides estimators of $\bbeta_1$ and $\bbeta_2$ with a lower bias; however, all biases may be considered negligible here.
The very interesting results can be noted in terms of efficiency; here, using the Gaussian CWM instead of the contaminated Gaussian CWM always leads to a substantial increase in the MSE of the estimators of $\bbeta_1$ and $\bbeta_2$.
The increase in the MSE ranges between 314,650\% and 441,146\% when $n = 200$ and between 491,683\% and 1054.952\% when $n=400$.

\subsection{Sensitivity study based on real data}
\label{subsec:Sensitivity study based on real data}

A sensitivity study, based on a real data set, is here described to compare how atypical observations affect the Gaussian CWM and how them are instead handled by the contaminated Gaussian CWM.
The Students data set, introduced by \citet{Ingr:Mino:Punz:Mode:2014} and available at \url{http://www.economia.unict.it/punzo/Data.htm}, is a suitable data set for this purpose.
The data come from a survey of $n=270$ students attending a statistics course at the Department of Economics and Business of the University of Catania in the academic year 2011/2012. 
Although the questionnaire included seven items, the following analysis only concerns, for illustrative purposes, the variables $\mathsf{HEIGHT}$ (height of the respondent, measured in centimeters) and $\mathsf{HEIGHT.F}$ (height of respondent's father, measured in centimeters).
Therefore, the role of $\mathsf{HEIGHT}$ and $\mathsf{HEIGHT.F}$ as response variable and covariate, respectively, is clearly justified.
Moreover, there are $k = 2$ groups of respondents with respect to the gender: 119 males and 151 females.
The scatter plot of the data, with labeling and regression lines based on gender, is shown in \figurename~\ref{fig:realdata}.
\begin{figure}[!ht]
\centering
\subfigure[True labels and regression lines\label{fig:realdata}]
{\resizebox{0.485\textwidth}{!}{\includegraphics{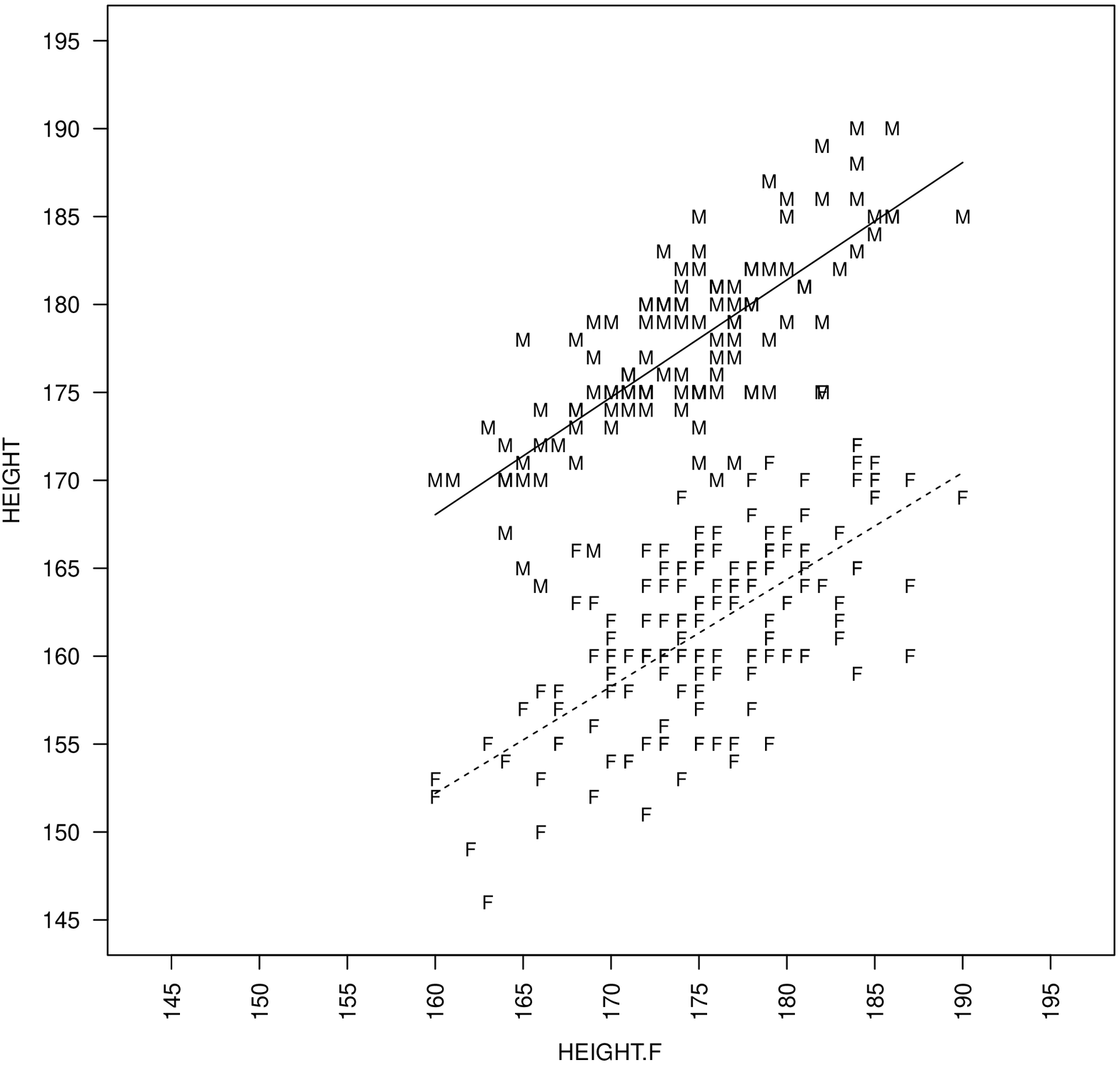}}}
\subfigure[Labels and regression lines from the Gaussian CWM\label{fig:NCWM}]
{\resizebox{0.485\textwidth}{!}{\includegraphics{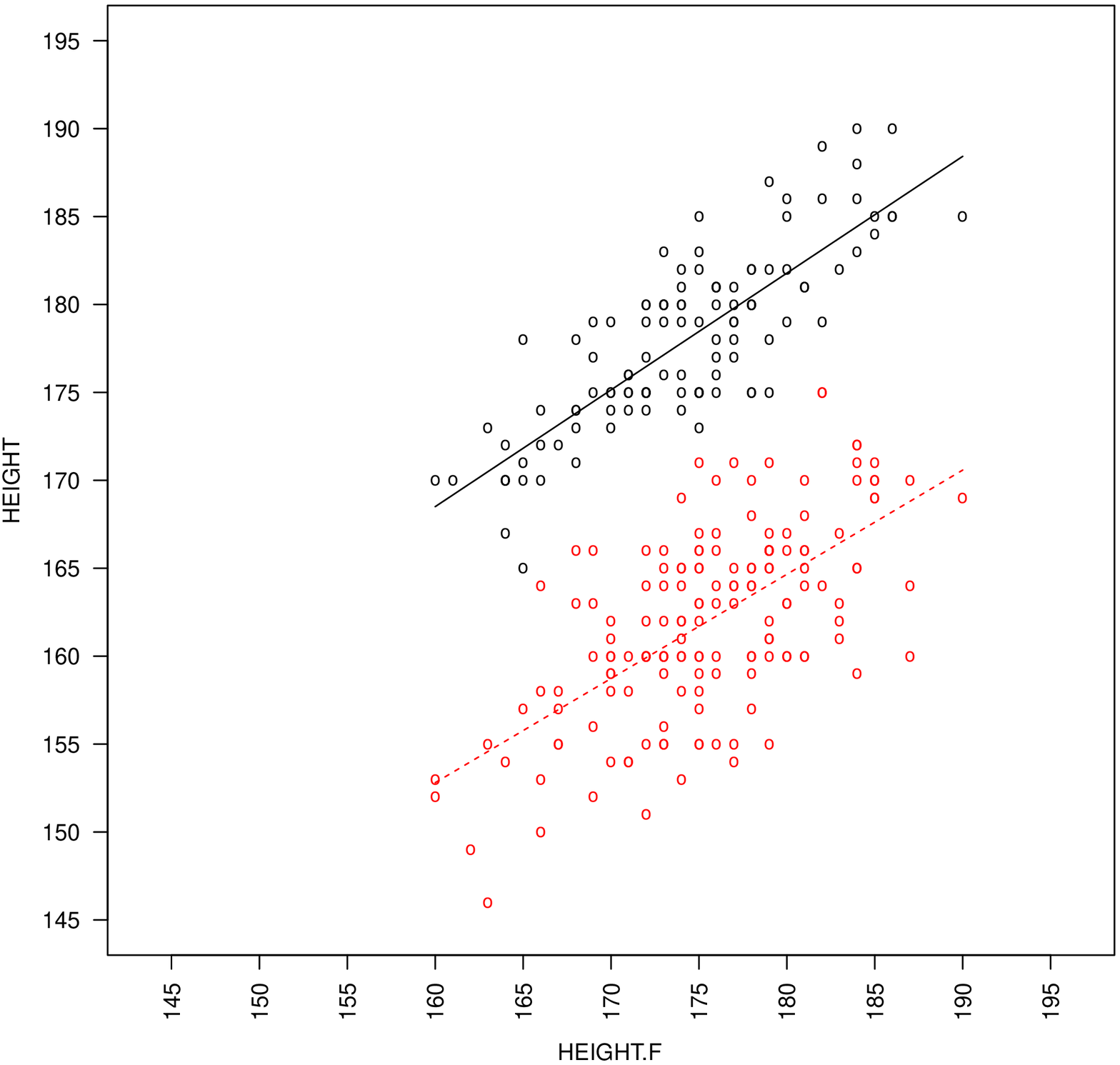}}}
\caption{
Student data: scatter plots and regression lines from the true labeling (on the left; \textsf{M} denotes male and \textsf{F} female) and from the fitting of the Gausian CWM with $k=2$ (on the right).
\label{fig:real data and Gaussian CWM}
}
\end{figure}

By ignoring the classification induced by gender, data are fitted for $k\in\left\{1,2,3\right\}$ according to the Gaussian CWM and the contaminated Gaussian CWM.
\tablename~\ref{tab:BIC values} shows the obtained BIC values.
\begin{table}[!ht]
\caption{
BIC values on the original data. 
}
\label{tab:BIC values}
\centering
\begin{tabular}{lrr}
\toprule
     	   &	 Gaussian CWM	 &	contaminated Gaussian CWM   \\
\midrule															
$k=1$  & -3710.469 & -3732.909 \\
$k=2$  & -3601.953 & -3646.741 \\
$k=3$  & -3767.954 & -3835.135 \\
\bottomrule	
\end{tabular}
\end{table}
The best model is the Gaussian CWM with $k=2$; the corresponding classification and regression lines are displayed in \figurename~\ref{fig:NCWM}.
Based on \figurename~\ref{fig:realdata}, the estimated regression lines appear to be in agreement with the true ones.
The classification is good too: the model only yields six misclassified observations (six males erroneously considered as females), corresponding to a very low misclassification rate of 0.022. 
This model will be considered as the benchmark to judge the results of the next two sections.

%
%

\subsubsection{Adding a single atypical point}
\label{subsubsec:Adding a single atypical point}

The first sensitivity analysis aims to evaluate the impact of a single atypical observation on the fitting of the local regression lines for the Gaussian CWM and the contaminated Gaussian CWM.
With this end, fifteen ``perturbed'' data sets are generated by adding an atypical point to the data.
These points are all displayed together, as bullets, in \figurename~\ref{fig:atypicaldata}.
They represent different types of local atypical observations in accordance to \tablename~\ref{tab:Atypical observation labelling}.
\begin{figure}[!ht]
\centering
  \resizebox{0.49\textwidth}{!}{
	\includegraphics{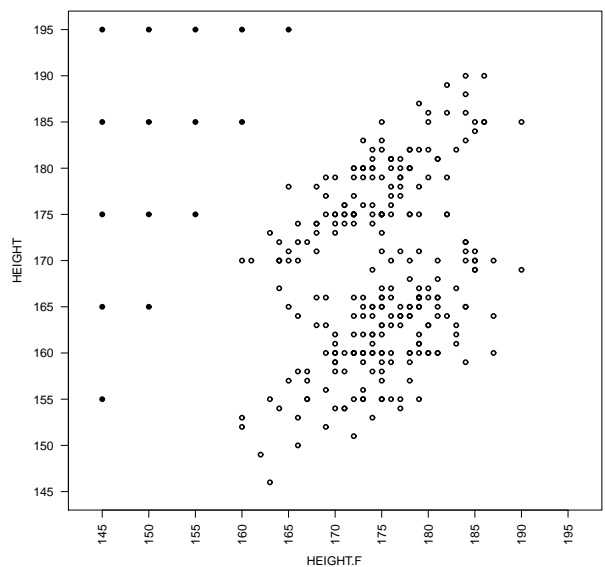}
	}%
\caption{
Student data: scatter plot with each \small{$\bullet$} denoting the observation added to perturb the original data.
}
\label{fig:atypicaldata}       
\end{figure} 

For each perturbed data set, the Gaussian CWM and the contaminated Gaussian CWM are fitted with $k=2$.
In all the fifteen scenarios, the contaminated Gaussian CWM detects only one atypical observation, the true one.
Moreover, while the regression lines from the contaminated Gaussian CWM are not substantially different from those displayed in \figurename~\ref{fig:NCWM}, there are some scenarios where one of the regression lines from the Gaussian CWM is severely dragged towards the atypical point.
This happens for the atypical points on the top-left corner of \figurename~\ref{fig:atypicaldata}; the most representative example is given in \figurename~\ref{fig:worst situation} (the entire set of plots is not reported here for brevity's sake).  
\begin{figure}[!ht]
\centering
\subfigure[Gaussian CWM \label{fig:NCWM145e195}]
{\resizebox{0.49\textwidth}{!}{\includegraphics{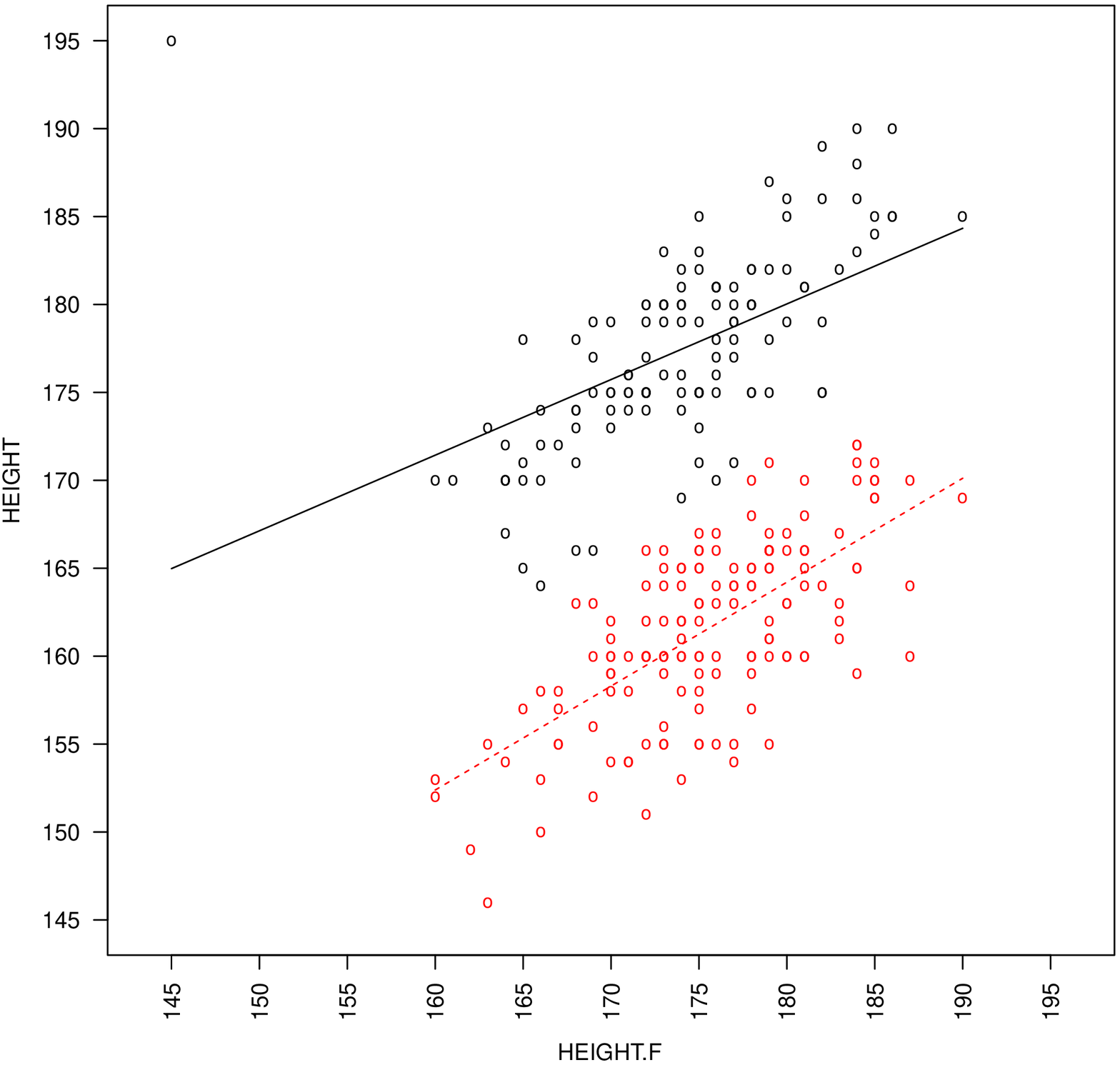}}}
\subfigure[Contaminated Gaussian CWM \label{fig:CNCWM145e195}]
{\resizebox{0.49\textwidth}{!}{\includegraphics{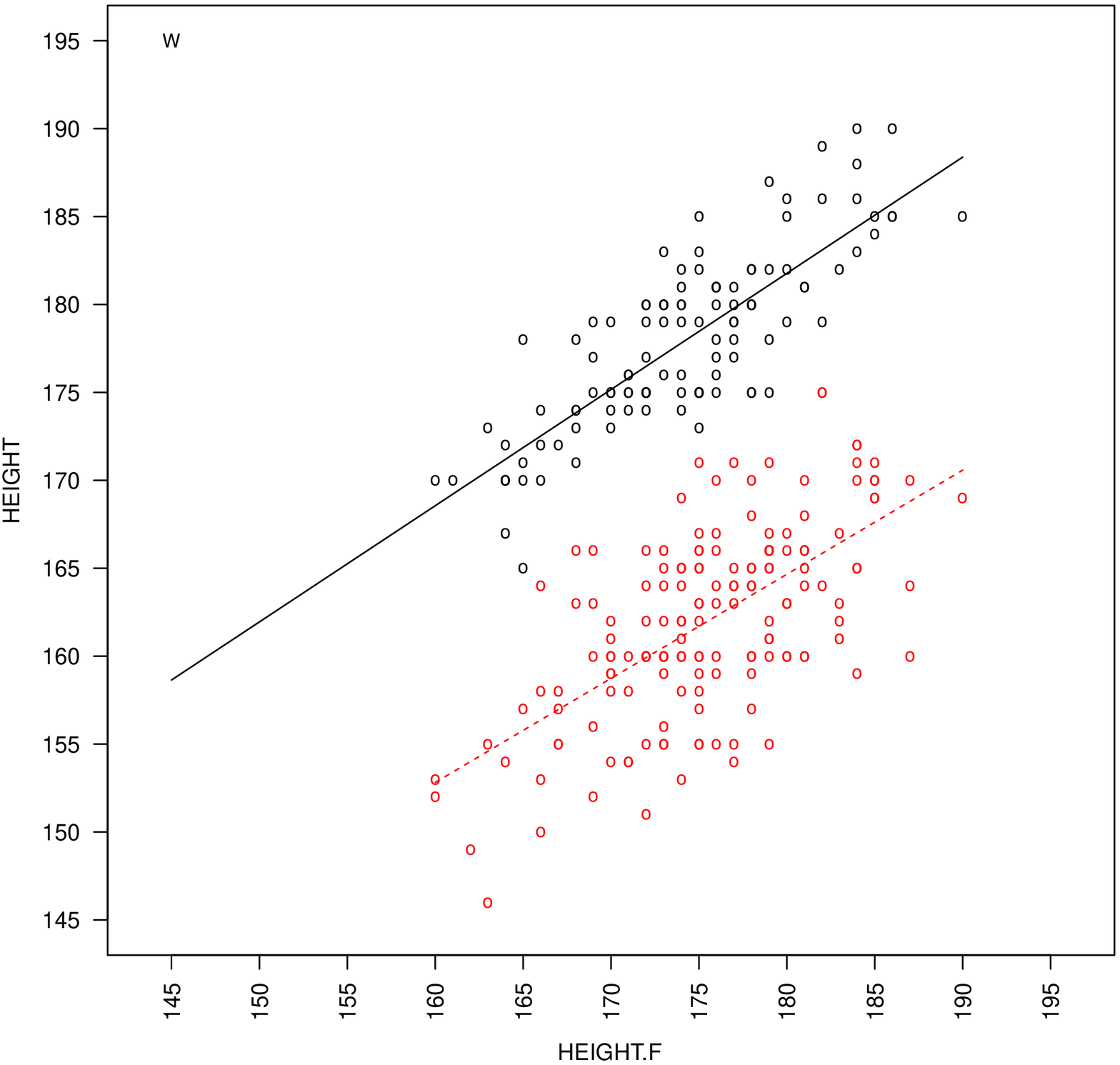}}}
\caption{
Perturbed Student data: scatter plots, labels, and regression lines from the Gaussian CWM (on the left) and from the contaminated Gaussian CWM (on the right; \textsf{W} indicates a detected local bad leverage point).
\label{fig:worst situation}
}
\end{figure}
In \figurename~\ref{fig:CNCWM145e195}, the label ``\textsf{W}'' indicates that the contaminated Gaussian CWM, based on the rule given in \tablename~\ref{tab:practical detection}, detects that point as atypical both on \textsf{HEIGHT.F} and $\mathsf{HEIGHT}|\mathsf{HEIGHT.F}$; in other words, this observation is a local bad leverage point according to \tablename~\ref{tab:Atypical observation labelling}.
On the contrary, \figurename~\ref{fig:best situation} shows a scenario where the two models provide similar results.
In \figurename~\ref{fig:CNCWM145e155}, the label ``\textsf{X}'' indicates that the contaminated Gaussian CWM detects that point as locally atypical only on \textsf{HEIGHT.F}; it is a good leverage point according to \tablename~\ref{tab:Atypical observation labelling}.  
\begin{figure}[!ht]
\centering
\subfigure[Gaussian CWM \label{fig:NCWM145e155}]
{\resizebox{0.49\textwidth}{!}{\includegraphics{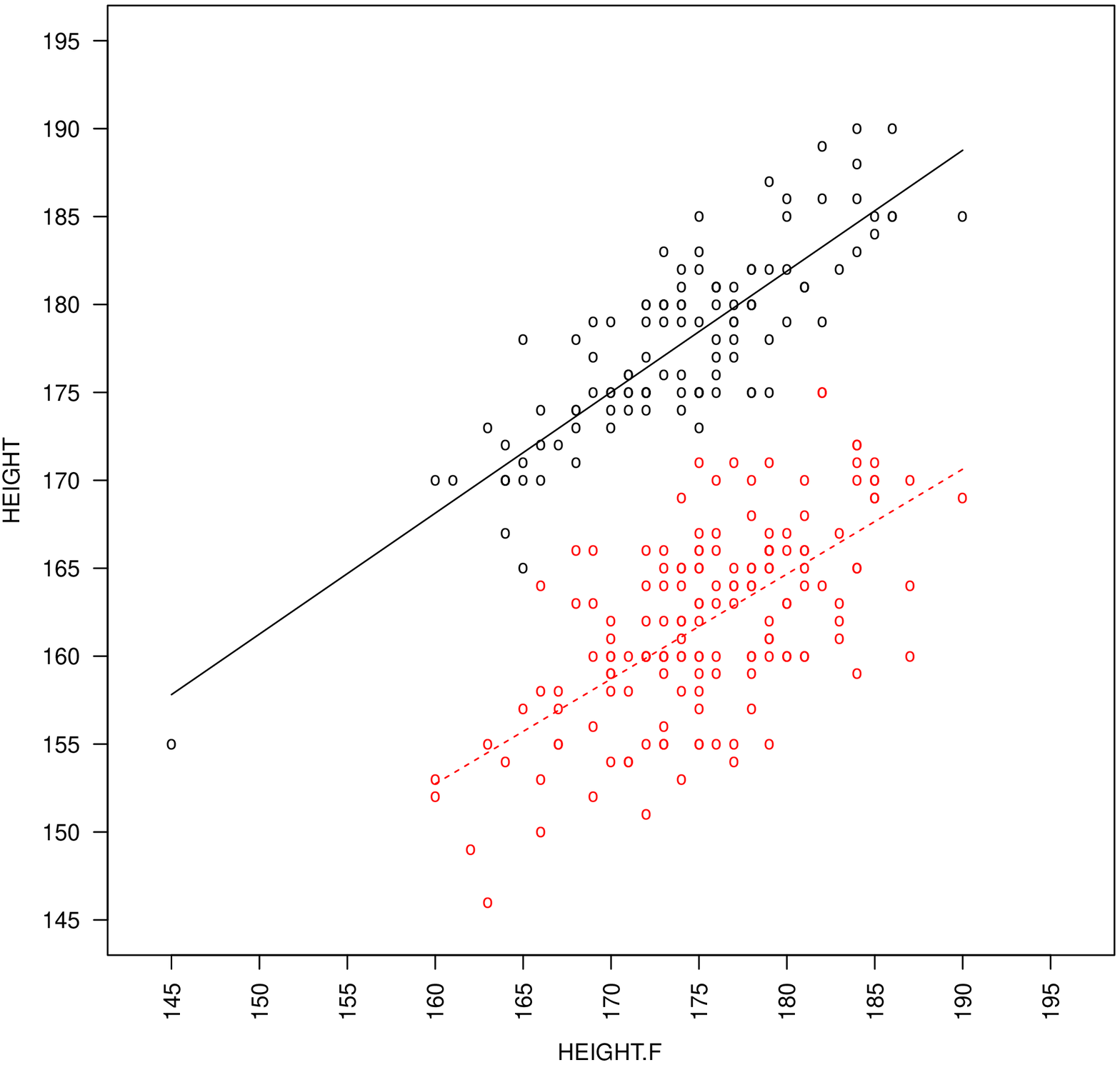}}}
\subfigure[Contaminated Gaussian CWM \label{fig:CNCWM145e155}]
{\resizebox{0.49\textwidth}{!}{\includegraphics{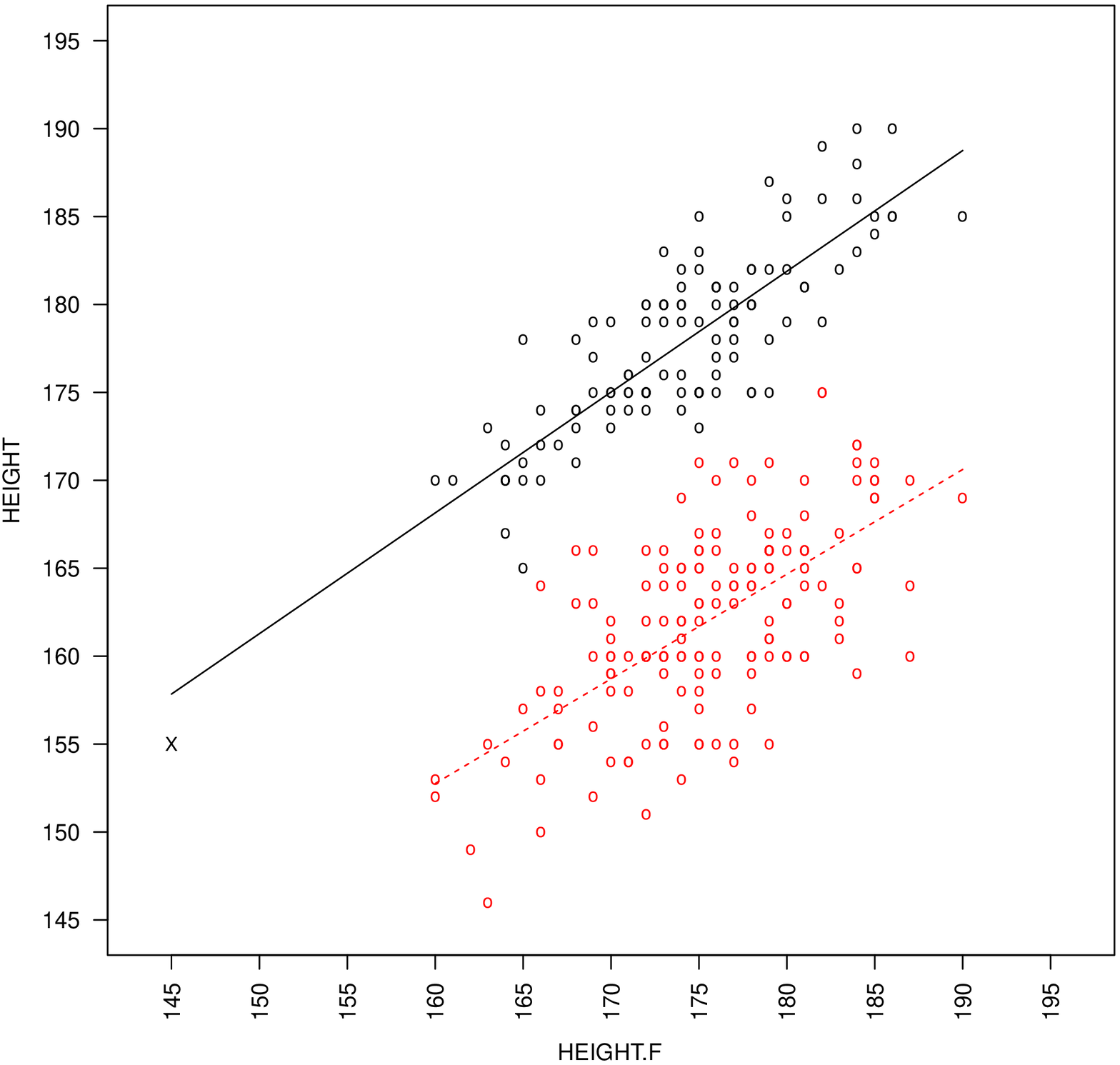}}}
\caption{
Perturbed Student data: scatter plots, labels, and regression lines from the Gaussian CWM (on the left) and from the contaminated Gaussian CWM (on the right; \textsf{X} indicates a detected local good leverage point).
\label{fig:best situation}
}
\end{figure}

To complete the analysis for the contaminated Gaussian CWM, \tablename~\ref{tab:eta values} shows the estimated values of the degrees of contamination $\eta_{\mathsf{HEIGHT.F}}$ and $\eta_{\mathsf{HEIGHT}}$ in the group containing the atypical point.
\begin{table}[!b]
\caption{
Values of the degrees of contamination $\left(\eta_{\mathsf{HEIGHT.F}},\eta_{\mathsf{HEIGHT}}\right)$, for each perturbed data set, in the component of the contaminated Gaussian CWM containing the atypical observation. 
}
\label{tab:eta values}
\centering
\begin{tabular}{l| ccccc}
\toprule
 & \multicolumn{5}{l}{\textsf{HEIGHT.F}} \\    	 
 \textsf{HEIGHT} & 145 & 150 & 155 & 160 & 165 \\
\midrule															
195   &  $\left(10.083,93.845\right)$     &  $\left(4.637,74,335\right)$     &  $\left(1.529,57.424\right)$    & $\left(1.019,43.559\right)$ & $\left(1.005,32.226\right)$ \\
185   &  $\left(10.087,43.225\right)$     &  $\left(4.632,31.968\right)$     &  $\left(1.502,22.836\right)$    & $\left(1.004,15.125\right)$ &    \\
175   &  $\left(10.070,14.562\right)$     &  $\left(4.650,\ \ 7.119\right)$  &  $\left(1.654,\ \ 2.419\right)$ &    &    \\
165   &  $\left(10.132,\ \ 1.086\right)$  &  $\left(4.652,\ \ 1.000\right)$  &     &    &    \\
155   &  $\left(10.073,\ \ 1.089\right)$  &      &     &    &    \\
\bottomrule	
\end{tabular}
\end{table} 
As expected, the estimate of $\eta_{\mathsf{HEIGHT.F}}$ increases as the value of \textsf{HEIGHT.F}, for the atypical point, further departs from the bulk of the values of \textsf{HEIGHT.F} in its group of membership, regardless from the value of \textsf{HEIGHT}; 
this can be easily noted by looking at \tablename~\ref{tab:eta values} column-by-column.   
Finally, the estimate of $\eta_{\mathsf{HEIGHT}}$ increases as the atypical point further departs from the regression line of the group the atypical point is assigned.

\subsubsection{Adding uniform noise}
\label{subsubsec:Adding uniform noise}

The second sensitivity analysis aims to evaluate the impact of noise on fitting and clustering from the Gaussian CWM and the contaminated Gaussian CWM.
With this end, we modify the original data by including twenty noisy points generated from a uniform distribution over a square centered on the bivariate mean $\left(174.963,168.652\right)$ of the observations and with side of length $60$ (centimeters).
This square contains the original data. 
\figurename~\ref{fig:datawithnoise} shows the modified data set with bullets denoting uniform noise points.
\begin{figure}[!ht]
\centering
  \resizebox{0.49\textwidth}{!}{
	\includegraphics{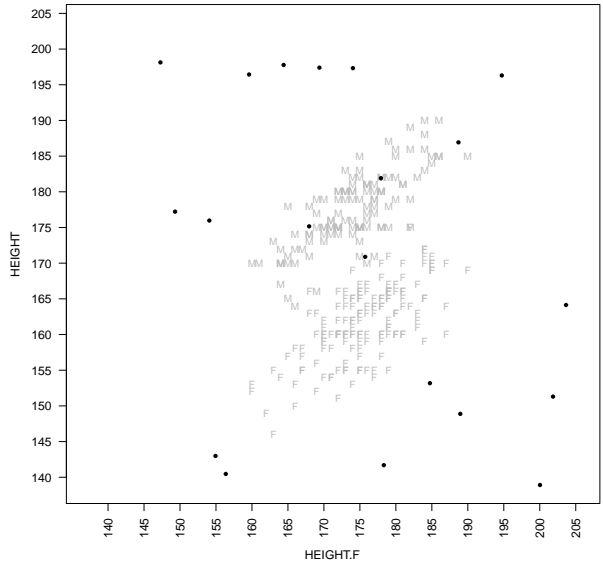}
	}%
\caption{
Scatter plot of the Student data where the added uniform noise points are denoted by~$\bullet$.
}
\label{fig:datawithnoise}       
\end{figure}

\tablename~\ref{tab:noise - BIC values} shows the BIC values, in correspondence of $k\in\left\{ 1,2,3\right\}$, for the Gaussian CWM and the contaminated Gaussian CWM. 
\begin{table}[!ht]
\caption{
BIC values on the data with uniform noise. 
}
\label{tab:noise - BIC values}
\centering
\begin{tabular}{lrr}
\toprule
     	   &	 Gaussian CWM	 &	contaminated Gaussian CWM   \\
\midrule															
$k=1$  & -4206.043 & -4228.757 \\
$k=2$  & -4178.727 & -4142.435 \\
$k=3$  & -4209.625 & -4152.590 \\
\bottomrule	
\end{tabular}
\end{table}
Generally, the best model is the contaminated Gaussian CWM with $k=2$.
Among the fitted Gaussian CWMs, the best one, in terms of BIC, has $k=2$ components.
For comparison's sake, these models are displayed in \figurename~\ref{fig:CWMs with noise}.
\begin{figure}[!ht]
\centering
\subfigure[Gaussian CWM \label{fig:NCWMnoise}]
{\resizebox{0.49\textwidth}{!}{\includegraphics{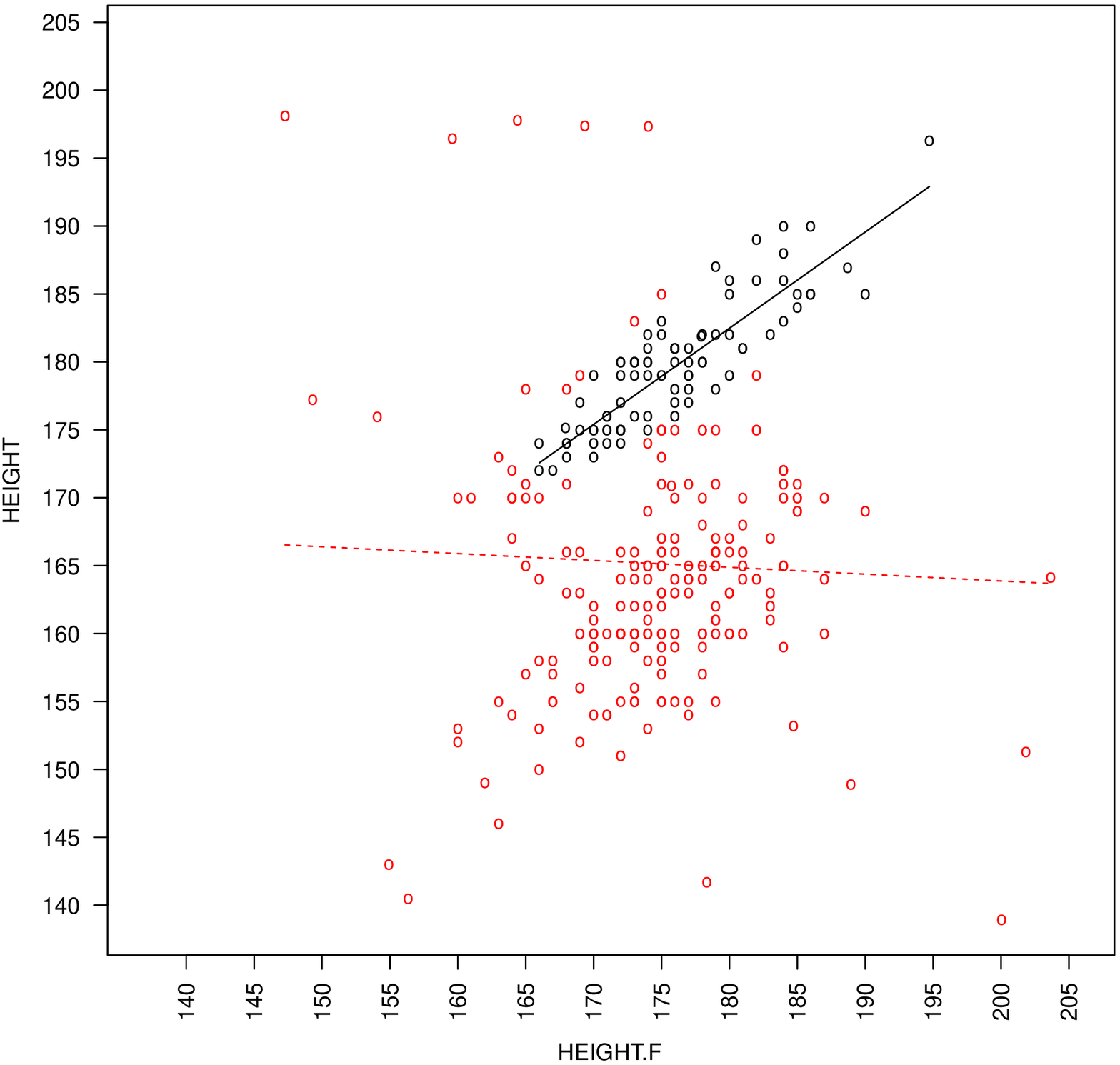}}}
\subfigure[Contaminated Gaussian CWM \label{fig:CNCWMnoise}]
{\resizebox{0.49\textwidth}{!}{\includegraphics{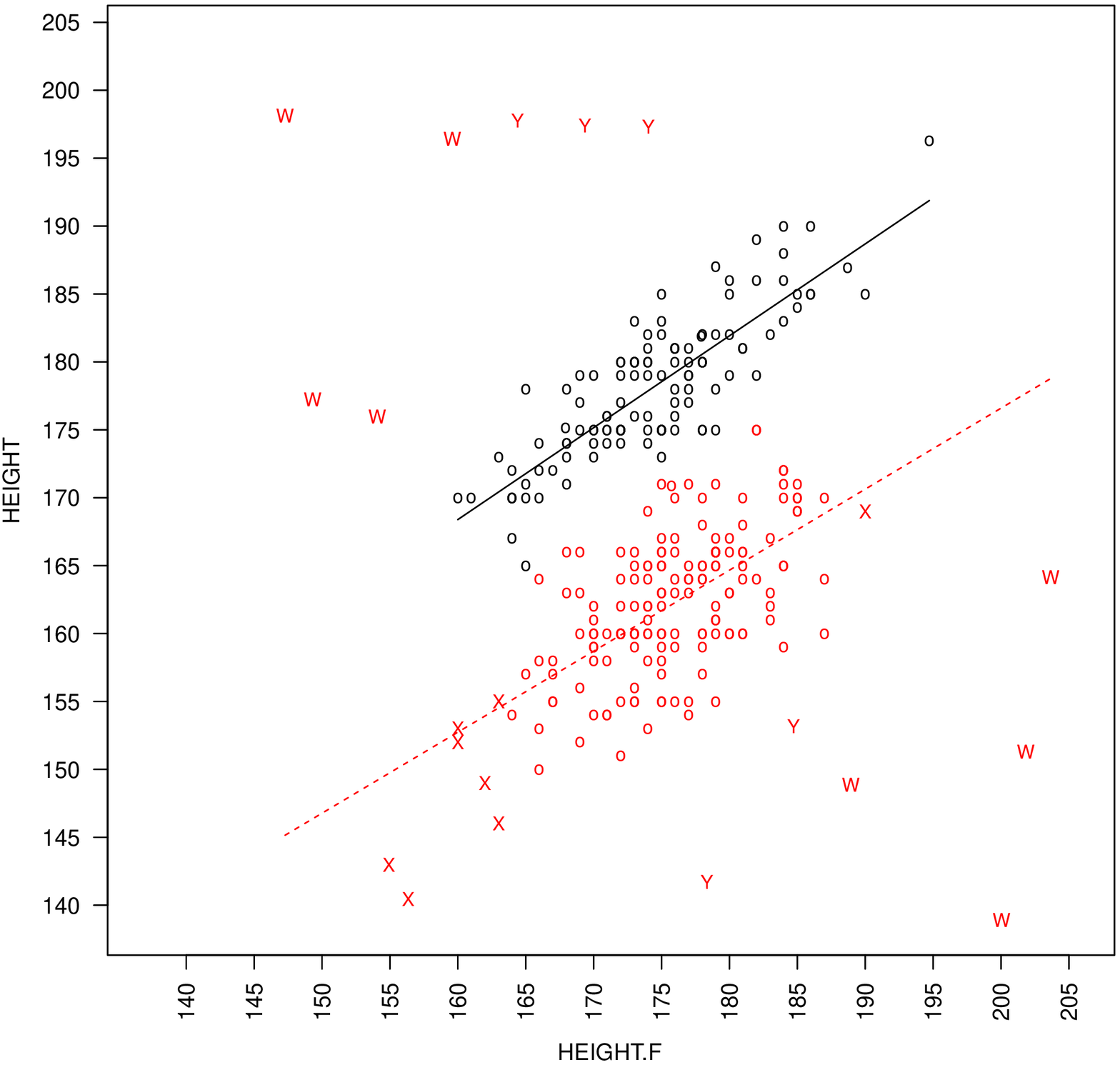}}}
\caption{
Student data with noise: scatter plots, labels, and regression lines from the Gaussian CWM (on the left) and from the contaminated Gaussian CWM (on the right; \textsf{W} denotes bad leverage points, \textsf{X} denotes good leverage points, and \textsf{Y} denotes outliers).
\label{fig:CWMs with noise}
}
\end{figure}
It is important to note that, in \figurename~\ref{fig:CNCWMnoise}, \textsf{Y} denotes the detected outliers, \textsf{X} indicates the detected good leverage points, and \textsf{W} denotes the bad leverage points.
Still importantly, we can see the poor results obtained by the Gaussian CWM, where the regression line referred to the females is severely affected by the noisy observations; as a by-product in clustering terms, the number of original observations misclassified increases from 6 --- obtained by the model on the original data only --- to 36.
On the contrary, our model maintains at 6 the number of misclassified original observations and provides regression lines which are in agreement with those displayed in \figurename~\ref{fig:NCWM}. 
Finally, our model is able to classify each observation, with respect to its group of membership, in accordance to the four categories given in \tablename~\ref{tab:Atypical observation labelling}.




\section{Discussion}
\label{sec:Discussion and future work}

The contaminated Gaussian CWM has been introduced as a generalization of the Gaussian CWM \citep{Dang:Punz:McNi:Ingr:Brow:Mult:2014} that accommodates atypical observations; the analyses of Section~\ref{sec:Numerical studies} have shown its usefulness. 
More importantly, however, the contaminated Gaussian CWM is put forward as a gold standard for robust clustering in regression analysis, where observations, in addition to be assigned to the groups, also need to be classified in one of the four categories given in \tablename~\ref{tab:Atypical observation labelling}.  
Although approaches such as mixtures of $t$ regression models, mixtures of Laplace regression models, and $t$ CWMs, can be used for robust clustering in regression analysis, they assimilate atypical points into clusters rather than separating them out in a direct way.
Clusterwise linear regression through trimming can also be used, but it requires \textit{a~priori} specification of the proportion of outliers and leverage points, but this is not always possible in practice; in fact, it is all but impossible if the data cannot easily be visualized.

Another distinct advantage of our contaminated Gaussian CWM over clusterwise linear regression through trimming is that we can easily extend the approach to model-based classification \citep[see, e.g.,][]{McNi:Mode:2010} and model-based discriminant analysis \citep{Hast:Tibs:Disc:1996}. 
In fact, there are a number of options for the type of supervision that could be used in partial classification applications for our model, i.e., one could specify some of the $\bz_i$ and/or some of the $\bu_i$ and $\bv_i$ \textit{a~priori}, $i=1,\ldots,n$. 
This provides yet more flexibility than exhibited by any competing approach, as does the ability of our approach to work in higher dimensions where atypical observations cannot easily be visualized.


Future work will focus on the following avenues.
\begin{itemize}
	\item The development of an {\sf R} package to facilitate dissemination of our contaminated Gaussian CWM.
	\item It would be interesting to investigate the sample breakdown points for the proposed method. 
However, we should note that the analysis of breakdown point for traditional linear regression cannot be directly applied to mixtures of regression models. 
\citet{Garc:Gord:Mayo:SanM:Robu:2010} also stated that the traditional definition of breakdown point is not the right one to quantify the robustness of mixtures of regression models to atypical observations, since the robustness of these procedures is not only data dependent but also cluster dependent.
\citet{Henn:Break:2004} provided a new definition of breakdown points for mixture models based on the breakdown of at least one of the mixture components.
Based on the results of \citet{Henn:Break:2004} about mixtures of $t$ distributions, we guess that only extreme outliers would lead to the breakdown of the contaminated Gaussian CWM.
Therefore, we believe that the model can still be used as a robust approach with the exception of extreme atypical observations that, however, can easily be deleted. 
	\item In the fashion of \citet{Banf:Raft:mode:1993} and \citet{Cele:Gova:Gaus:1995}, and more directly according to \citet{Punz:McNi:Robu:2013} and \citet{Dang:Punz:McNi:Ingr:Brow:Mult:2014}, the proposed approach could be made more flexible and parsimonious by imposing constraints on the eigen-decomposed component matrices $\bSigma_{\bX|j}$ and $\bSigma_{\bY|j}$, $j=1,\ldots,k$. 
	In the fashion of \citet{Sube:Punz:Ingr:McNi:Clus:2013,Sube:Punz:Ingr:McNi:Clus:2014} and \citet{Punz:McNi:Robu:2014}, parsimony, but also dimension reduction, could be obtained by exploiting local factor analyzers.
	\item Further developments of our model could be obtained by studying the asymptotic properties of the ML estimators and by defining statistical tests for evaluating the significance of the regression coefficients $\bbeta_j$, $j=1,\ldots,k$. 
	Moreover, still working on the eigen-decomposed component matrices $\bSigma_{\bX|j}$ and $\bSigma_{\bY|j}$, $j=1,\ldots,k$, in the fashion of \citet{Ingr:Alik:2004}, \citet{Ingr:Rocc:Cons:2007}, and \citet{Brow:Sube:McNi:Cons:2013}, suitable constraints on their eigenvalues during the ECM algorithm could attenuate possible problems on the likelihood function such as unboundedness and spurious local maxima \citep[see also][]{Seo:Kim:Root:2012}. 
\end{itemize}


\appendix

\section{Proof of Proposition~\ref{pro:contaminated Gaussian CWM versus mixtures of contaminated Gaussian regressions}}
\label{app:Proof of Proposition 1}

\begin{proof}
Under the assumptions of the proposition, model~\eqref{eq:conditional from a CN CWM} simplifies as
\begin{align*}
p\left(\by|\bx;\bvartheta\right)=&
\sum_{j=1}^k
\frac{\pi_jf\left(\bx;\bmu_{\bX},\bSigma_{\bX},\alpha_{\bX},\eta_{\bX}\right)}{\displaystyle\sum_{h=1}^k\pi_hf\left(\bx;\bmu_{\bX},\bSigma_{\bX},\alpha_{\bX},\eta_{\bX}\right)}f\left(\by;\bmu_{\bY}\left(\bx;\bbeta_j\right),\bSigma_{\bY|j},\alpha_{\bY|j},\eta_{\bY|j}\right)\nonumber\\
=&\sum_{j=1}^k
\pi_jf\left(\by;\bmu_{\bY}\left(\bx;\bbeta_j\right),\bSigma_{\bY|j},\alpha_{\bY|j},\eta_{\bY|j}\right),
\end{align*}
which corresponds to the conditional distribution from a mixture of contaminated Gaussian regression models as defined by \eqref{eq:mixtures of contaminated Gaussian regressions}.
\end{proof}

\section{Proof of Proposition~\ref{pro:1}}
\label{app:Proof of Proposition 2}

\begin{proof}
Suppose that 
\begin{equation}
p\left(\bx,\by;\bvartheta\right)=p\left(\bx,\by;\widetilde{\bvartheta}\right).
\label{eq:equality of parameterizations}
\end{equation}
Integrating out $\by$ from \eqref{eq:equality of parameterizations} yields 
\begin{equation}
\sum_{j=1}^k\pi_jf\left(\bx;\bmu_{\bX|j},\bSigma_{\bX|j},\alpha_{\bX|j},\eta_{\bX|j}\right)\nonumber\\
=
\sum_{s=1}^{\widetilde{k}}\widetilde{\pi}_sf\left(\bx;\widetilde{\bmu}_{\bX|s},\widetilde{\bSigma}_{\bX|s},\widetilde{\alpha}_{\bX|s},\widetilde{\eta}_{\bX|s}\right),
\label{eq:equality of parameterizations only on the covariates}
\end{equation}
corresponding to the marginal distribution of $\bX$, say $p\left(\bx;\bpi,\bmu_{\bX},\bSigma_{\bX},\balpha_{\bX},\boldsymbol{\eta}_{\bX}\right)$.
Dividing \eqref{eq:equality of parameterizations} by the left-hand side of \eqref{eq:equality of parameterizations only on the covariates} leads to
\begin{align}
p\left(\by|\bx;\bvartheta\right)&=\sum_{j=1}^k
\frac{\pi_j
f\left(\bx;\bmu_{\bX|j},\bSigma_{\bX|j},\alpha_{\bX|j},\eta_{\bX|j}\right)}{\displaystyle\sum_{t=1}^k\pi_tf\left(\bx;\bmu_{\bX|t},\bSigma_{\bX|t},\alpha_{\bX|t},\eta_{\bX|t}\right)}
f\left(\by;\bmu_{\bY}\left(\bx;\bbeta_j\right),\bSigma_{\bY|j},\alpha_{\bY|j},\eta_{\bY|j}\right)\nonumber\\
&= \sum_{s=1}^{\widetilde{k}}
\frac{\widetilde{\pi}_s
f\left(\bx;\widetilde{\bmu}_{\bX|s},\widetilde{\bSigma}_{\bX|s},\widetilde{\alpha}_{\bX|s},\widetilde{\eta}_{\bX|s}\right)}{\displaystyle\sum_{t=1}^k\pi_tf\left(\bx;\bmu_{\bX|t},\bSigma_{\bX|t},\alpha_{\bX|t},\eta_{\bX|t}\right)}
f\left(\by;\bmu_{\bY}\left(\bx;\widetilde{\bbeta}_s\right),\widetilde{\bSigma}_{\bY|s},\widetilde{\alpha}_{\bY|s},\widetilde{\eta}_{\bY|s}\right)=p\left(\by|\bx;\widetilde{\bvartheta}\right).
\label{eq:equality of parameterizations on the response}
\end{align}
For each fixed $\bx$, this is a mixture of contaminated Gaussian distributions for $\bY$ \citep[cf.][]{Punz:McNi:Robu:2013}.

Following the scheme of \citet[][p.~292]{Henn:Iden:2000}, define the set of all covariate points $\bx$ which can be used to distinct different regression coefficients $\bbeta_j$ by different values of $\bmu_{\bY}\left(\bx;\bbeta_j\right)$:
\begin{align*}
	\mathcal{X}=\Big\{\bx\in\real^{d_{\bX}}:\ &\forall\ j,l\in \left\{1,\ldots,k\right\} \text{ and } s,t\in \left\{1,\ldots,\widetilde{k}\right\}, \\
	&\bmu_{\bY}\left(\bx;\bbeta_j\right)=\bmu_{\bY}\left(\bx;\bbeta_l\right)\ \Rightarrow\ \bbeta_j=\bbeta_l,\\ 
	&\bmu_{\bY}\left(\bx;\bbeta_j\right)=\bmu_{\bY}\left(\bx;\widetilde{\bbeta}_s\right)\ \Rightarrow\ \bbeta_j=\widetilde{\bbeta}_s,\\ 
	&\bmu_{\bY}\left(\bx;\widetilde{\bbeta}_s\right)=\bmu_{\bY}\left(\bx;\widetilde{\bbeta}_t\right)\ \Rightarrow\ \widetilde{\bbeta}_s=\widetilde{\bbeta}_t
	\Big\}.
\end{align*}
Note that, $\mathcal{X}$ is complement of a finite union of hyperplanes of $\real^{d_{\bX}}$.
Therefore, 
\begin{displaymath}
\int_{\mathcal{X}}p\left(\bx;\bpi,\bmu_{\bX},\bSigma_{\bX},\balpha_{\bX},\boldsymbol{\eta}_{\bX}\right)d\bx=1.	
\end{displaymath}
For $\bx\in\mathcal{X}$, all pairs $\left(\bmu_{\bY}\left(\bx;\bbeta_j\right),\bSigma_{\bY|j}\right)$, $j=1,\ldots,k$, are pairwise distinct, because all $\left(\bbeta_j,\bSigma_{\bY|j}\right)$, $j=1,\ldots,k$, are pairwise distinct for the condition~\eqref{eq:sufficient condition} of the proposition.
Based on \cite{Punz:McNi:Robu:2013}, such a condition also guarantees that, for each $\bx\in\mathcal{X}$, model~\eqref{eq:equality of parameterizations on the response} is identifiable and this implies that $k=\widetilde{k}$ and also implies that for each $j\in\left\{1,\ldots,k\right\}$ there exists an $s\in\left\{1,\ldots,k\right\}$ such that 
\begin{displaymath}
\alpha_{\bY|j}=\widetilde{\alpha}_{\bY|s},
\quad \bbeta_j=\widetilde{\bbeta}_s,
\quad \bSigma_{\bY|j}=\widetilde{\bSigma}_{\bY|s},
\quad \eta_{\bY|j}=\widetilde{\eta}_{\bY|s}
\end{displaymath}
and 
\begin{equation}
\frac{\pi_j
f\left(\bx;\bmu_{\bX|j},\bSigma_{\bX|j},\alpha_{\bX|j},\eta_{\bX|j}\right)}{\displaystyle\sum_{t=1}^k\pi_tf\left(\bx;\bmu_{\bX|t},\bSigma_{\bX|t},\alpha_{\bX|t},\eta_{\bX|t}\right)}=\frac{\widetilde{\pi}_s
f\left(\bx;\widetilde{\bmu}_{\bX|s},\widetilde{\bSigma}_{\bX|s},\widetilde{\alpha}_{\bX|s},\widetilde{\eta}_{\bX|s}\right)}{\displaystyle\sum_{t=1}^k\pi_tf\left(\bx;\bmu_{\bX|t},\bSigma_{\bX|t},\alpha_{\bX|t},\eta_{\bX|t}\right)}. 
\label{eq:contaminated weights 1}
\end{equation}
Now, based on \eqref{eq:equality of parameterizations}, the equality in \eqref{eq:contaminated weights 1} simplifies as
\begin{equation}
\pi_j
f\left(\bx;\bmu_{\bX|j},\bSigma_{\bX|j},\alpha_{\bX|j},\eta_{\bX|j}\right)=\widetilde{\pi}_s
f\left(\bx;\widetilde{\bmu}_{\bX|s},\widetilde{\bSigma}_{\bX|s},\widetilde{\alpha}_{\bX|s},\widetilde{\eta}_{\bX|s}\right),\quad \forall\ \bx\in\mathcal{X}. 
\label{eq:contaminated weights 2}
\end{equation}
Integrating \eqref{eq:contaminated weights 2} over $\bx\in\mathcal{X}$ yields $\pi_j=\widetilde{\pi}_s$.
Therefore, condition~\eqref{eq:contaminated weights 2} further simplifies as
\begin{equation*}
f\left(\bx;\bmu_{\bX|j},\bSigma_{\bX|j},\alpha_{\bX|j},\eta_{\bX|j}\right)=
f\left(\bx;\widetilde{\bmu}_{\bX|s},\widetilde{\bSigma}_{\bX|s},\widetilde{\alpha}_{\bX|s},\widetilde{\eta}_{\bX|s}\right),\quad \forall\ \bx\in\mathcal{X}. 
\end{equation*}
The equalities $\alpha_{\bX|j}=\widetilde{\alpha}_{\bX|s}$, $\bmu_{\bX|j}=\widetilde{\bmu}_{\bX|s}$, $\bSigma_{\bX|j}=\widetilde{\bSigma}_{\bX|s}$, and $\eta_{\bX|j}=\widetilde{\eta}_{\bX|s}$ simply arise from the identifiability of the contaminated Gaussian distribution, and this completes the proof.
\end{proof}

\section{Updates in the first CM-step}
\label{app:Updates}

The estimates of $\pi_j$, $\mu_{\bX|j}$, $\bSigma_{\bX|j}$, $\alpha_{\bX|j}$, $\bbeta_j$, $\bSigma_{\bY|j}$, and $\alpha_{\bY|j}$, $j=1,\ldots,k$, at the $\left(r+1\right)$th first CM-step of the ECM algorithm, require the maximization of 
\begin{equation}
Q\left(\bvartheta_1|\bvartheta^{\left(r\right)}\right)
=Q_1\left(\bpi|\bvartheta^{\left(r\right)}\right)
+Q_2\left(\balpha_{\bX}|\bvartheta^{\left(r\right)}\right)
+Q_3\left(\bmu_{\bX},\bSigma_{\bX}|\bvartheta^{\left(r\right)}\right)
+Q_4\left(\balpha_{\bY}|\bvartheta^{\left(r\right)}\right)
+Q_5\left(\bbeta,\bSigma_{\bY}|\bvartheta^{\left(r\right)}\right),
\label{eq:global Q}
\end{equation}
where  
\begin{align*}
&Q_1\left(\bpi|\bvartheta^{\left(r\right)}\right)=\sum_{i=1}^{n}\sum_{j=1}^{k}z_{ij}^{\left(r\right)}\ln \pi_j,\qquad\qquad
Q_2\left(\balpha_{\bX}|\bvartheta^{\left(r\right)}\right)=\sum_{i=1}^{n}\sum_{j=1}^{k}z_{ij}^{\left(r\right)}\left[v_{ij}^{\left(r\right)}\ln \alpha_{\bX|j}+\left(1-v_{ij}^{\left(r\right)}\right)\ln \left(1-\alpha_{\bX|j}\right)\right],\\
&Q_3\left(\bmu_{\bX},\bSigma_{\bX}|\bvartheta^{\left(r\right)}\right)=-\frac{1}{2}\sum_{i=1}^n\sum_{j=1}^k\Biggl\{z_{ij}^{\left(r\right)}\ln\left|\bSigma_{\bX|j}\right|+z_{ij}^{\left(r\right)}\left(v_{ij}^{\left(r\right)}+\frac{1-v_{ij}^{\left(r\right)}}{\eta_{\bX|j}^{\left(r\right)}}\right)\delta\left(\bx_i,\bmu_{\bX|j};\bSigma_{\bX|j}\right)\Biggr\},\\
&Q_4\left(\balpha_{\bY}|\bvartheta^{\left(r\right)}\right)=\sum_{i=1}^{n}\sum_{j=1}^{k}z_{ij}^{\left(r\right)}\left[u_{ij}^{\left(r\right)}\ln \alpha_{\bY|j}+\left(1-u_{ij}^{\left(r\right)}\right)\ln \left(1-\alpha_{\bY|j}\right)\right],\\
&Q_5\left(\bbeta,\bSigma_{\bY}|\bvartheta^{\left(r\right)}\right)=-\frac{1}{2}\sum_{i=1}^n\sum_{j=1}^k\Biggl\{z_{ij}^{\left(r\right)}\ln\left|\bSigma_{\bY|j}\right|+z_{ij}^{\left(r\right)}\left(u_{ij}^{\left(r\right)}+\frac{1-u_{ij}^{\left(r\right)}}{\eta_{\bY|j}^{\left(r\right)}}\right)\delta\left(\bx_i,\bmu_{\bY}\left(\bx_i;\bbeta_j\right);\bSigma_{\bY|j}\right)\Biggr\}.
\end{align*}
Terms which are independent by the parameters of interest have been removed from $Q_3$ and $Q_5$. 
As the five terms on the right-hand side of \eqref{eq:global Q} have zero cross-derivatives, they can be maximized separately.

\subsection{Update of $\bpi$}
\label{subapp:bpi}

The maximum of $Q_1\left(\bpi|\bvartheta^{\left(r\right)}\right)$ with respect to $\bpi$, subject to the constraints on those parameters, is obtained by maximizing the augmented function
\begin{equation}
\sum_{i=1}^n\sum_{j=1}^kz_{ij}^{\left(r\right)}\ln\pi_j-\lambda\left(\sum_{j=1}^k\pi_j-1\right),
\label{eq:Lagrange}
\end{equation}
where $\lambda$ is a Lagrangian multiplier.
Setting the derivative of equation \eqref{eq:Lagrange} with respect to $\pi_j$ equal to zero and solving for $\pi_j$ yields
\begin{equation*}
\pi_j^{\left(r+1\right)}=\displaystyle\displaystyle\sum_{i=1}^nz_{ij}^{\left(r\right)}\Big/n.
\end{equation*} 

\subsection{Update of $\balpha_{\bX}$}
\label{subapp:balphaX}

The updates for $\balpha_{\bX}$ can be obtained through the first partial derivatives
\begin{align}
	\frac{\partial Q_2\left(\balpha_{\bX}|\bvartheta^{\left(r\right)}\right)}{\partial \alpha_{\bX|j}} & = \frac{1}{\alpha_{\bX|j}}\sum_{i=1}^n z_{ij}^{\left(r\right)}v_{ij}^{\left(r\right)}-\frac{1}{1-\alpha_{\bX|j}}\sum_{i=1}^n z_{ij}^{\left(r\right)}\left(1-v_{ij}^{\left(r\right)}\right)
	= \frac{\displaystyle\sum_{i=1}^n z_{ij}^{\left(r\right)}v_{ij}^{\left(r\right)}-\alpha_{\bX|j}\sum_{i=1}^n z_{ij}^{\left(r\right)}}{\alpha_{\bX|j}\left(1-\alpha_{\bX|j}\right)},\qquad j=1,\ldots,k.
	\label{eq:alphaX derivative}
\end{align}
Equating \eqref{eq:alphaX derivative} to zero yields
\begin{displaymath}
	\alpha_{\bX|j}^{\left(r+1\right)}=\displaystyle\sum_{i=1}^n z_{ij}^{\left(r\right)}v_{ij}^{\left(r\right)}\Bigg/\displaystyle\sum_{i=1}^n z_{ij}^{\left(r\right)},\qquad j=1,\ldots,k.
\end{displaymath}

\subsection{Update of $\bmu_{\bX}$ and $\bSigma_{\bX}$}
\label{subapp:bmuX and bSigmaX}

The updates for $\bmu_{\bX}$ can be obtained through the first partial derivatives
\begin{align}
	\frac{\partial Q_3\left(\bmu_{\bX},\bSigma_{\bX}|\bvartheta^{\left(r\right)}\right)}{\partial \bmu_{\bX|j}} & = \frac{\partial \left[-\displaystyle\frac{1}{2}\displaystyle\sum_{i=1}^nz_{ij}^{\left(r\right)}\left(v_{ij}^{\left(r\right)}+\frac{1-v_{ij}^{\left(r\right)}}{\eta_{\bX|j}^{\left(r\right)}}\right)\left(\bx_i-\bmu_{\bX|j}\right)'\bSigma_{\bX|j}^{-1}\left(\bx_i-\bmu_{\bX|j}\right)\right]}{\partial \bmu_{\bX|j}}\nonumber\\
	& = \sum_{i=1}^nz_{ij}^{\left(r\right)}\left(v_{ij}^{\left(r\right)}+\frac{1-v_{ij}^{\left(r\right)}}{\eta_{\bX|j}^{\left(r\right)}}\right)\bSigma_{\bX|j}^{-1}\left(\bx_i-\bmu_{\bX|j}\right), \qquad j=1,\ldots,k.
	\label{eq:muX derivative}
\end{align}
Equating \eqref{eq:muX derivative} to the null vector yields
\begin{displaymath}
\bmu_{\bX|j}^{\left(r+1\right)}=\sum_{i=1}^n\frac{z_{ij}^{\left(r\right)}\displaystyle\left(v_{ij}^{\left(r\right)}+\frac{1-v_{ij}^{\left(r\right)}}{\eta_{\bX|j}^{\left(r\right)}}\right)}{\displaystyle\sum_{q=1}^n{z}_{qj}^{\left(r\right)}\left(v_{qj}^{\left(r\right)}+\frac{1-v_{qj}^{\left(r\right)}}{\eta_{\bX|j}^{\left(r\right)}}\right)}\bx_i, \qquad j=1,\ldots,k.	
\end{displaymath}

The updates for $\bSigma_{\bX}$, using results from matrix derivatives and trace operator \citep[see, e.g.,][]{Lutk:Hand:1996}, can be obtained through the first partial derivatives
\begin{align}
	\frac{\partial Q_3\left(\bmu_{\bX},\bSigma_{\bX}|\bvartheta^{\left(r\right)}\right)}{\partial \bSigma_{\bX|j}^{-1}} 
	& = \frac{\partial \left\{\displaystyle\frac{1}{2}\displaystyle\sum_{i=1}^n z_{ij}^{\left(r\right)} \left\{\ln\left|\bSigma_{\bX|j}^{-1}\right| - \left(v_{ij}^{\left(r\right)}+\displaystyle\frac{1-v_{ij}^{\left(r\right)}}{\eta_{\bX|j}^{\left(r\right)}}\right)\tr\left[\left(\bx_i-\bmu_{\bX|j}^{\left(r+1\right)}\right)'\bSigma_{\bX|j}^{-1}\left(\bx_i-\bmu_{\bX|j}^{\left(r+1\right)}\right)\right]\right\}\right\}}{\partial \bSigma_{\bX|j}^{-1}}\nonumber\\
	& = \frac{\partial \left\{\displaystyle\frac{1}{2}\displaystyle\sum_{i=1}^n z_{ij}^{\left(r\right)} \left\{\ln\left|\bSigma_{\bX|j}^{-1}\right| - \left(v_{ij}^{\left(r\right)}+\displaystyle\frac{1-v_{ij}^{\left(r\right)}}{\eta_{\bX|j}^{\left(r\right)}}\right)\tr\left[\bSigma_{\bX|j}^{-1}\left(\bx_i-\bmu_{\bX|j}^{\left(r+1\right)}\right)\left(\bx_i-\bmu_{\bX|j}^{\left(r+1\right)}\right)'\right]\right\}\right\}}{\partial \bSigma_{\bX|j}^{-1}}\nonumber\\
	&=\displaystyle\frac{1}{2}\displaystyle\sum_{i=1}^n z_{ij}^{\left(r\right)} \left[\bSigma_{\bX|j}+ \left(v_{ij}^{\left(r\right)}+\displaystyle\frac{1-v_{ij}^{\left(r\right)}}{\eta_{\bX|j}^{\left(r\right)}}\right)\left(\bx_i-\bmu_{\bX|j}^{\left(r+1\right)}\right)\left(\bx_i-\bmu_{\bX|j}^{\left(r+1\right)}\right)'\right], \qquad j=1,\ldots,k.
	\label{eq:SigmaX derivative}
\end{align}
Equating \eqref{eq:SigmaX derivative} to the null matrix yields
\begin{displaymath}
	\bSigma_{\bX|j}^{\left(r+1\right)} =\frac{1}{\displaystyle\sum_{i=1}^n z_{ij}^{\left(r\right)}}\sum_{i=1}^nz_{ij}^{\left(r\right)}\left(v_{ij}^{\left(r\right)}+\frac{1-v_{ij}^{\left(r\right)}}{\eta_{\bX|j}^{\left(r\right)}}\right)\left(\bx_i-\displaystyle\bmu_{\bX|j}^{\left(r+1\right)}\right)\left(\bx_i-\displaystyle\bmu_{\bX|j}^{\left(r+1\right)}\right)', \qquad j=1,\ldots,k.
\end{displaymath}

\subsection{Update of $\balpha_{\bY}$}
\label{subapp:balphaY}

Analogously to \ref{subapp:balphaX}, the updates for $\balpha_{\bY}$ can be obtained through the first partial derivatives
\begin{equation}
	\frac{\partial Q_4\left(\balpha_{\bY}|\bvartheta^{\left(r\right)}\right)}{\partial \alpha_{\bY|j}} = \frac{\displaystyle\sum_{i=1}^n z_{ij}^{\left(r\right)}u_{ij}^{\left(r\right)}-\alpha_{\bY|j}\sum_{i=1}^n z_{ij}^{\left(r\right)}}{\alpha_{\bY|j}\left(1-\alpha_{\bY|j}\right)}, \qquad j=1,\ldots,k.
	\label{eq:alphaY derivative}
\end{equation}
Equating \eqref{eq:alphaY derivative} to zero yields
\begin{displaymath}
	\alpha_{\bY|j}^{\left(r+1\right)}=\displaystyle\sum_{i=1}^n z_{ij}^{\left(r\right)}u_{ij}^{\left(r\right)}\Bigg/\displaystyle\sum_{i=1}^n z_{ij}^{\left(r\right)}, \qquad j=1,\ldots,k.
\end{displaymath}

\subsection{Update of $\bbeta$ and $\bSigma_{\bY}$}
\label{subapp:bbeta and bSigmaY}

Using properties of trace and transpose, the updates for $\bbeta$ can be obtained through the first partial derivatives
\begin{align}
	\frac{\partial Q_5\left(\bbeta,\bSigma_{\bY}|\bvartheta^{\left(r\right)}\right)}{\partial \bbeta_j'} & = \frac{\partial \left\{-\displaystyle\frac{1}{2}\displaystyle\sum_{i=1}^nz_{ij}^{\left(r\right)}\left(u_{ij}^{\left(r\right)}+\frac{1-u_{ij}^{\left(r\right)}}{\eta_{\bY|j}^{\left(r\right)}}\right)\left[\by_i-\bmu_{\bY}\left(\bx_i;\bbeta_j\right)\right]'\bSigma_{\bX|j}^{-1}\left[\by_i-\bmu_{\bY}\left(\bx_i;\bbeta_j\right)\right]\right\}}{\partial \bbeta_j'}\nonumber\\
	&=  \frac{\partial \left[-\displaystyle\frac{1}{2}\displaystyle\sum_{i=1}^nz_{ij}^{\left(r\right)}\left(u_{ij}^{\left(r\right)}+\frac{1-u_{ij}^{\left(r\right)}}{\eta_{\bY|j}^{\left(r\right)}}\right)\left(-\by_i'\bSigma_{\bY|j}^{-1}\bbeta_j'\bx_i^*-\bx_i^{*'}\bbeta_j\bSigma_{\bY|j}^{-1}\by_i+\bx_i^{*'}\bbeta_j\bSigma_{\bY|j}^{-1}\bbeta_j'\bx_i^*\right)\right]}{\partial \bbeta_j'}	\nonumber\\
	&=  \frac{\partial \left\{\displaystyle\frac{1}{2}\displaystyle\sum_{i=1}^nz_{ij}^{\left(r\right)}\left(u_{ij}^{\left(r\right)}+\frac{1-u_{ij}^{\left(r\right)}}{\eta_{\bY|j}^{\left(r\right)}}\right)\left[\tr\left(\bbeta_j'\bx_i^*\by_i'\bSigma_{\bY|j}^{-1}\right)+\tr\left(\left(\bSigma_{\bY|j}^{-1}\by_i\bx_i^{*'}\right)'\bbeta_j'\right)-\tr\left(\bbeta_j'\bx_i^*\bx_i^{*'}\bbeta_j\bSigma_{\bY|j}^{-1}\right)\right]\right\}}{\partial \bbeta_j'}\nonumber\\
	&= \displaystyle\frac{1}{2}\displaystyle\sum_{i=1}^nz_{ij}^{\left(r\right)}\left(u_{ij}^{\left(r\right)}+\frac{1-u_{ij}^{\left(r\right)}}{\eta_{\bY|j}^{\left(r\right)}}\right)\left(2\bSigma_{\bY|j}^{-1}\by_i\bx_i^{*'}-2\bSigma_{\bY|j}^{-1}\bbeta_j'\bx_i^*\bx_i^{*'}\right), \qquad j=1,\ldots,k.
	\label{eq:bbeta derivative}
\end{align}
Equating \eqref{eq:bbeta derivative} to the null matrix yields
\begin{displaymath}
\bbeta_j^{\left(r+1\right)} = \left[\sum_{i=1}^n z_{ij}^{\left(r\right)}\left(u_{ij}^{\left(r\right)}+\frac{1-u_{ij}^{\left(r\right)}}{\eta_{\bY|j}^{\left(r\right)}}\right)\bx_i^*\bx_i^{*'}\right]^{-1}\left[\sum_{i=1}^n z_{ij}^{\left(r\right)}\left(u_{ij}^{\left(r\right)}+\frac{1-u_{ij}^{\left(r\right)}}{\eta_{\bY|j}^{\left(r\right)}}\right)\bx_i^*\by_i\right], \qquad j=1,\ldots,k.	
\end{displaymath}

Finally, the updates for $\bSigma_{\bY}$ can be obtained, analogously to the updates for $\bSigma_{\bX}$ given in \ref{subapp:bmuX and bSigmaX}, through the first partial derivatives
\begin{equation}
	\frac{\partial Q_5\left(\bbeta,\bSigma_{\bY}|\bvartheta^{\left(r\right)}\right)}{\partial \bSigma_{\bY|j}^{-1}} 
	=\displaystyle\frac{1}{2}\displaystyle\sum_{i=1}^n z_{ij}^{\left(r\right)} \left\{\bSigma_{\bY|j}+ \left(u_{ij}^{\left(r\right)}+\displaystyle\frac{1-u_{ij}^{\left(r\right)}}{\eta_{\bY|j}^{\left(r\right)}}\right)\left[\by_i-\bmu_{\bY}\left(\bx_i;\bbeta_j^{\left(r+1\right)}\right)\right]\left[\by_i-\bmu_{\bY}\left(\bx_i;\bbeta_j^{\left(r+1\right)}\right)\right]'\right\}, \qquad j=1,\ldots,k.
	\label{eq:SigmaY derivative}
\end{equation}
Equating \eqref{eq:SigmaY derivative} to the null matrix yields
\begin{displaymath}
	\bSigma_{\bY|j}^{\left(r+1\right)} =\frac{1}{n_j^{\left(r\right)}}\sum_{i=1}^nz_{ij}^{\left(r\right)}\left(u_{ij}^{\left(r\right)}+\frac{1-u_{ij}^{\left(r\right)}}{\eta_{\bY|j}^{\left(r\right)}}\right)\left[\by_i-\displaystyle\bmu_{\bY}\left(\bx_i;\bbeta_j^{\left(r+1\right)}\right)\right]\left[\by_i-\displaystyle\bmu_{\bY}\left(\bx_i;\bbeta_j^{\left(r+1\right)}\right)\right]', \qquad j=1,\ldots,k.
\end{displaymath}

\section*{References}

\bibliographystyle{elsarticle-harv} 


\end{document}